\documentclass[copyright,creativecommons]{eptcs}

	\usepackage{amsmath,amsthm,amsfonts}
	\usepackage{xspace}
	\usepackage{graphicx}
	\usepackage{tikz}
	\usepackage[small]{subfigure}
	\usepackage{paralist}
	\usepackage{dsfont}
	\usepackage{breakurl}             
	\usepackage{xfrac}
	\usepackage{bbm}

	\usepackage[ruled,linesnumbered,noline]{algorithm2e}
	\SetKw{Let}{let}
	\SetKwInOut{Input}{Input}


\newcommand{\bbI}{\mathbb{I}}
\def\bbR{\mathbb{R}} 
\def\bbN{\mathbb{N}} 
\def\bbNpos{\mathbb{N}_{\geq 1}} 

\newcommand{\One}{\mathbbmss{1}}
\newcommand{\Half}{\mathbbmss{t}}

\newcommand{\floor}[1]{{\lfloor {#1} \rfloor}}

\newtheorem{definition}{Definition}
\newtheorem{theorem}{Theorem}

\newtheorem{lemma}{Lemma}
\newtheorem{proposition}{Proposition}

\DeclareMathOperator{\Untime}{Untime}

\newcommand{\dims}[1]{k}















\newcommand{\trans}[1]{\xrightarrow{#1}}

\newcommand{\cut}[1]{ }

	\title{Deterministic Timed Finite State Machines: Equivalence Checking and Expressive Power}
	\author{Davide Bresolin
		\institute{University of Bologna \\ 
		Bologna, Italy}
		\email{davide.bresolin@unibo.it}
	\and 
		Khaled El-Fakih
		\institute{American University of Sharjah \\
		Sharjah, United Arab Emirates}
		\email{kelfakih@aus.edu}
	\and 
		Tiziano Villa
		\institute{University of Verona \\ 
		Verona, Italy}
		\email{tiziano.villa@univr.it}
	\and 
		Nina Yevtushenko
		\institute{Tomsk State University \\
		Tomsk, Russia}
		\email{yevtushenko@sibmail.com}}

	\begin{document}

	\maketitle

	\begin{abstract}
	There has been a growing interest in defining models of automata enriched with time.
	For instance, timed automata were introduced as automata extended with clocks.
	In this paper, we study models of timed finite state machines (TFSMs), i.e.,
	FSMs enriched with time, which accept timed input words and generate timed 
	output words.
	Here we discuss some models of TFSMs with a single clock: 
	TFSMs with timed guards, TFSMs with timeouts, and TFSMs with both timed 
	guards and timeouts.
	We solve the problem of equivalence checking for all three models, 
	and we compare their expressive power, characterizing subclasses of TFSMs 
	with timed guards and of TFSMs with timeouts that are equivalent to each
	other.
	\end{abstract}

	\section{Introduction}

Finite automata (FA) and finite state machines (FSMs) are formal models 
widely used in the practice of engineering and science, e.g.,
in application domains ranging from sequential circuits, communication 
protocols, embedded and reactive systems, to biological modelling. 

Since the 90s, the standard classes of FA have been enriched with 
the introduction of time constraints to represent more accurately 
the behaviour of systems in discrete or continuous time.
Timed automata (TA) are such an example: they are finite automata augmented 
with a number of resettable real-time clocks, whose transitions 
are triggered by predicates involving clock values~\cite{Alur-tcs1994}.

More recently, timed models of FSMs have been proposed in the literature
by the introduction of time constraints such as timed guards or timeouts.
Timed guards restrict the input/output transitions to happen within given
time intervals. In particular, the timed FSM proposed in~\cite{Gromov2009, ElFakih2013, ElFakih-scp2014} features: one clock variable, time constraints 
to limit the time elapsed at a state, and clock reset when a transition
is executed.

The timed FSM proposed in~\cite{Merayo2008,Hierons-jlap2009} 
features: one clock variable, time constraints to limit the time elapsed 
when an output has to be produced after an input has been applied to the FSM, 
clock reset when an output is produced, timeouts. 
The meaning of timeouts is the following: if no input is applied at 
a current state for some timeout period, the timed FSM moves from 
the current state to another state using a timeout function;
e.g., timeouts are common in telecommunication protocols and systems. 

TA and TFSMs are also used when deriving tests for discrete event systems.
However, methods for deriving complete finite test suites with a
guaranteed fault coverage exist only for TFSMs, therefore TFSMs are
preferred over TA and other models, when the derivation of complete tests
is required.

%
%

In this paper, we investigate some models of TFSMs with a single clock: 
TFSMs with only timed guards, TFSMs with only timeouts, and TFSMs with 
both timed guards and timeouts.
We solve the problem of equivalence checking for all three models, 
we compare their expressive power, and we characterize subclasses of TFSMs 
with timed guards and of TFSMs with timeouts that are equivalent to each
other.
These results are obtained by introducing relations of bisimulation that 
define untimed finite state machines whose states include information
on the clock regions. This is reminiscent of the region graph construction
used to prove that in timed automata the verification questions 
(e.g., expressed by safety properties) have the same answer for all the clock valuations in the same clock region~\cite{Alur-tcs1994}.
In our case, we are able to prove a stronger result: the timed behaviours of
two timed FSMs are equivalent if and only if the behaviours of the companion
untimed FSMs are equivalent.
So our models of timed FSMs strike a good balance
between expressivity and computational complexity.



	\section{Timed FSM Models}
        \label{sec:timedfsms}

	Let $A$ be a finite alphabet, and let $\bbR^+$ be the set of non-negative reals. A \emph{timed symbol} is a pair $(a,t)$ where $t \in \bbR^+$ is called the \emph{timestamp} of the symbol $a\in A$. 
	A timed word is then defined as a finite sequence $(a_1,t_1)(a_2,t_2)(a_3,t_3)\dots$ of timed symbols where the sequence of timestamps $t_1 t_2 t_3 \dots$ is increasing.
	All the timed models considered in this paper are input/output machines that operate by reading a \emph{timed input word} $(i_1,t_1)(i_2,t_2)\dots(i_k,t_k)$ defined on some \emph{input alphabet $I$}, and producing a corresponding \emph{timed output word} $(o_1,t_1)$ $(o_2,t_2)\dots(o_k,t_k)$ on some \emph{output alphabet $O$}. 
	The production of outputs is assumed to be instantaneous: the timestamp of the $j$-th output $o_j$ is the same of the $j$-th input $i_j$. Models where there is a delay between reading an input and producing the related output are possible but not considered in this paper.
	Given a timed word $(a_1,t_1)(a_2,t_2)\dots(a_k,t_k)$, 
	$\Untime((a_1,t_1)(a_2,t_2)\dots(a_k,t_k)) = a_1 a_2 \dots a_k$ denotes
	the word obtained when deleting the timestamps.

	A timed possibly non-deterministic and partial FSM (TFSM) is an FSM 
	augmented with a clock. The clock is a real number that measures the time delay
	at a state, and its value is reset to zero when a transition is executed.
	In this section we first introduce the TFSM model with timed guards given in~\cite{ElFakih2013,Gromov2009} and the TFSM model with timeouts given in~\cite{Merayo2008,Zhigulin2011}. Then, we define a TFSM model with both timed guards and timeouts that subsumes the other two. In addition, we study the equivalence problem for each of the three TFSM models.

	\subsection{TFSM with timed guards}\label{sec:TFSM-TG}

	A timed guard defines the time interval when a transition can be executed.
	Intuitively, a TFSM in the present state $s$ and accepting input $i$ at a time 
	$t$ satisfying the timed guard responds with output $o$ and moves to the next 
	state $s'$, while the clock is reset to 0 and restarts advancing in state $s'$.

	\begin{definition}[TFSM with Timed Guards~\cite{ElFakih2013,Gromov2009}]\label{def:fsm-tg}
	A TFSM with timed guards is a tuple $M = (S, I, O, \lambda_S, s_0)$
	where $S$, $I$, and $O$ are finite disjoint non-empty sets of states, inputs and outputs, respectively, $s_0$ is the initial state, $\lambda_S \subseteq  S \times I \times \Pi \times  O \times  S$ 
	is a transition relation where $\Pi$ is the set of input timed guards. Each guard in $\Pi$ is an interval $g = \langle t_{min}, t_{max} \rangle$ where $t_{min}$ is a nonnegative integer, while $t_{max}$ is either a nonnegative integer or $\infty$, $t_{min} \leq t_{max}$, and  $\langle \in \big\{ ( , [ \big\}$ while $\rangle \in \big\{ ), ] \big\}$. 
	\end{definition}

	The \emph{timed state} of a TFSM is a pair $(s,x)$ such that $s \in S$ is a state of $M$ and $x \in \bbR^+$ is the current value of the clock. \emph{Transitions} between timed states can be of two types:
	\begin{compactitem}
		\item \emph{timed transitions} of the form $(s,x) \trans{t} (s,x+t)$ where $t \in \bbR^+$, representing the fact that a delay of $t$ time units has elapsed without receiving any input;
		\item \emph{input/output transitions} of the form $(s,x) \trans{i,o} (s',0)$, representing reception of the input symbol $i \in I$, production of the output $o \in O$ and reset of the clock. An input/output transition can be activated only if there exists a tuple $(s,i,\langle t_{min},t_{max}\rangle,o,s') \in \lambda_S$ such that $x \in \langle t_{min},t_{max}\rangle$.
	\end{compactitem}


	A \emph{timed run} of a TFSM with timed guards $M$ interleaves timed transitions with input/output transitions.
	Given a timed input word $v = (i_1,t_1)(i_2,t_2)\dots$ $(i_k,t_k)$, a timed run of $M$ over $v$ is a finite sequence $\rho = (s_0,0) \trans{t_1} (s_0,t_1) \trans{i_1,o_1} (s_1,0) \trans{t_2-t_1} (s_1,t_2-t_1) \trans{i_2,o_2} (s_2,0) \trans{t_3-t_2} \dots  \trans{i_k,o_k} (s_k,0)$ such that $s_0$ is the initial state of $M$, and for every $j \geq 0$ $(s_j,0) \trans{t_{j+1}-t_j} (s_j,t_{j+1}-t_j) \trans{i_{j+1},o_{j+1}} (s_{j+1},0)$ is a valid sequence of transitions of $M$. The timed run $\rho$ is said to \emph{accept} the timed input word $v= (i_1,t_1)(i_2,t_2)\dots(i_k,t_k)$ and to \emph{produce} the timed output word $u = (o_1,t_1)(o_2,t_2)\dots(o_k,t_k)$. 	The behavior of $M$ is defined in terms of the input/output words accepted and produced by the machine.

	\begin{definition}\label{def:behavior}
	The behavior of a TFSM $M$ is a partial mapping $B_M: (I\times \bbR)^* \mapsto 2^{(O\times \bbR)^*}$ that associates every input word $w$ accepted by $M$ with the set of output words $B_M(w)$ produced by $M$ under input $w$. When $M$ is an untimed FSM the behavior is defined as a partial mapping $B_M: I^* \mapsto 2^{O^*}$.

Two machines $M$ and $M'$ with the same input and output alphabets are \emph{equivalent} if and only if they have same behavior, i.e, $B_M = B_{M'}$.
	\end{definition}

	\paragraph{Complete and deterministic machines.}

	The usual definitions for FSMs of deterministic and non-deterministic, submachine, etc., can be extended to all timed FSMs models considered here. In particular, a TFSM is {\em complete} if for each state $s$, input $i$ and value of the clock $x$ there exists at least one transition $(s,x) \trans{i,o} (s',0)$,  otherwise the machine is {\em partial}. A TFSM is \emph{deterministic} if for each state $s$, input $i$ and value of the clock $x$  there exists at most one input/output transition, otherwise is {\em non-deterministics}.

	For the sake of simplicity, from now on we consider only \emph{complete and deterministic} machines, leaving the treatment of partial and non-deterministic TFSM to future work.
	When a machine $M$ is deterministic and complete, we have that $B_M(w)$ is a singleton set for every input word $w$. Hence, we can redefine the behavior $B_M$ as a function $B_M: (I\times \bbR)^* \mapsto (O\times \bbR)^*$ that associates every input word $w = (i_1,t_1)(i_2,t_2)\dots(o_k,t_k)$ with the unique output word $B_M(w)$ produced by $M$ under input $w$.

	Moreover, we can consider the transition relation of the machine as a complete function $\lambda_S: S \times I \times \bbR^+ \mapsto S \times O$ that takes as input the current state $s$, the delay $t$ and the input symbol  $i$ and produces the (unique) next state and output symbol $\lambda_S(s,t,i) = (s',o)$ such that $(s,0)\trans{t}(s,t)\trans{i,o}(s',0)$. With a slight abuse of the notation, we can extend it to a function $\lambda_S: S \times (I \times \bbR^+)^* \mapsto S \times O^*$ that takes as inputs the initial state $s$ and a timed word $w$, and returns the state reached by the machine after reading $w$ and the generated output word. We will use $s \trans{w,u} s'$ as a shorthand for $\lambda_S(s,w) = (s',u)$.

	\paragraph{Equivalence checking of TFSM with timed guards.}

	 In this section we show how to solve  the equivalence problem of TFSM with guards by reducing it to the equivalence problem of untimed FSM. We proceed in three steps: first, we show how to build an ``abstract'' FSM from a TFSM with guards; then we define an appropriate notion of bisimulation to compare TFSM with guards with untimed FSM; finally, from the properties of the bisimulation relation, we conclude that two TFSM with guards are equivalent if and only if their abstractions are equivalent.

	Now, let $M$ be a TFSM with guards. We define $\max(M)$ as the greatest integer constant (different from $\infty$) appearing in the guards of $\lambda_S$. For any natural number $N \geq \max(M)$, we define $\bbI_N$ as the set of intervals 
	%
$	\bbI_N = \{[n,n] \mid n \leq N\} \cup \{(n,n+1) \mid 0 \leq n < N\} \cup \{(N,\infty)\}.$
The discrete abstraction of a TFSM with guards will take as inputs pairs of the form $(i,\langle n,n'\rangle)$ where $i$ is the actual input and $\langle n,n'\rangle \in \bbI_N$ an interval representing the time delay.

	\begin{definition}\label{def:abstract-tfsm-tg}
	Given a TFSM with timed guards $M = (S, I, O, \lambda_S, s_0)$ and a natural number $N \geq \max(M)$, we define the \emph{abstract FSM} $A^N_M = (S, I\times \bbI_N, O, \lambda_A, s_0)$ as the untimed FSM such that $(s, (i,\langle n,n'\rangle), o, s') \in \lambda_A$ if and only if $(s, i, \langle t,t'\rangle, o, s') \in \lambda_S$ for some guard $\langle t,t'\rangle$ such that $\langle n,n'\rangle \subseteq \langle t,t'\rangle$.
	\end{definition}

	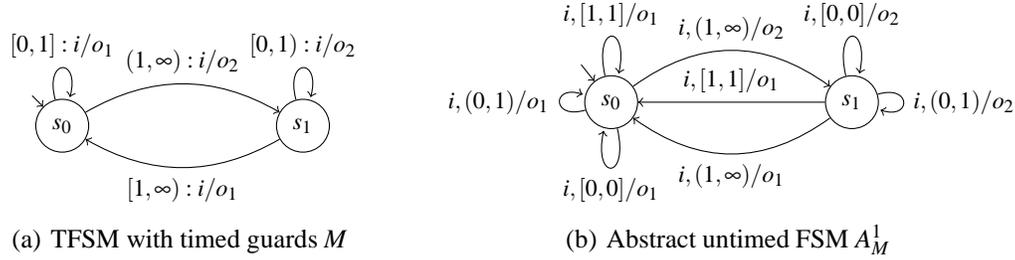
\begin{figure}[tbp]
	\centering
		\subfigure[TFSM with timed guards $M$]{
		\begin{tikzpicture}[font=\footnotesize,scale=0.8]
			\node[draw,circle,minimum size=20pt] (q0)	 at (0,0)	{$s_0$};
			\node[draw,circle,minimum size=20pt]	(q1) at (4,0)	{$s_1$};
			
			\draw[->] (-0.5,0.5) -- (q0);
			\path[->] (q0) edge[loop above] node        {$[0,1]:i/o_1$} ();
			\path[->] (q0) edge[bend left]  node[above] {$(1,\infty):i/o_2$} (q1);
			\path[->] (q1) edge[bend left]  node[below] {$[1,\infty):i/o_1$} (q0);
			\path[->] (q1) edge[loop above] node        {$[0,1):i/o_2$} ();
		\end{tikzpicture}\label{fig:tfsm-tg-ex}}
	\qquad 
		\subfigure[Abstract untimed FSM $A^1_M$]{
		\begin{tikzpicture}[font=\footnotesize,xscale=0.8]
			\node[draw,circle,minimum size=20pt] (q0)	 at (0,0)	{$s_0$};
			\node[draw,circle,minimum size=20pt]	(q1) at (4,0)	{$s_1$};
			
			\draw[->] (-0.5,0.5) -- (q0);
			\path[->] (q0) edge[loop below] node        {$i,[0,0]/o_1$} ();
			\path[->] (q0) edge[loop left] node        {$i,(0,1)/o_1$} ();
			\path[->] (q0) edge[loop above] node        {$i,[1,1]/o_1$} ();
			\path[->] (q0) edge[bend left]  node[above] {$i,(1,\infty)/o_2$} (q1);
			\path[->] (q1) edge node[above] {$i,[1,1]/o_1$} (q0);
			\path[->] (q1) edge[bend left] node[below] {$i,(1,\infty)/o_1$} (q0);
			\path[->] (q1) edge[loop above] node        {$i,[0,0]/o_2$} ();
			\path[->] (q1) edge[loop right] node        {$i,(0,1)/o_2$} ();
				\end{tikzpicture}\label{fig:tfsm-tg-ab}}

		\caption{{$\bbI_N$-abstraction} of TFSM with timed guards.}
	  \label{fig:tfsm-tg-abstraction}
	\end{figure}

	Figure~\ref{fig:tfsm-tg-abstraction} shows an example of a simple TFSM with timed guards and of the corresponding {$\bbI_N$-abstraction} (for $N = 1$).
	We cannot directly compare the behavior of a timed FSM with the behavior of its untimed abstraction, since the former accepts timed input words on $I$ and the latter accepts untimed input words on $I \times \bbI_N$. For this reason, we need to introduce the notion of abstraction of a timed word. 

	\begin{definition}\label{def:abstract-timed-word}
	Given a finite alphabet $A$, a finite timed word $v = (a_1,t_1)$ $(a_2,t_2)(a_3,t_3)\dots(a_m,t_m)$, an integer $N$ and the set of intervals $\bbI_N$, we define its \emph{$\bbI_N$-abstraction} as the finite word 
	$\bbI_N(v) = (a_1,\langle n_1,n_1'\rangle)$ $(a_2,\langle n_2,n_2'\rangle)(a_3,\langle n_3,n_3'\rangle)\ldots(a_m,\langle n_m,n_m'\rangle)$
	such that $t_i - t_{i-1} \in \langle n_i,n_i'\rangle$ for every $1 \leq i \leq m$.
	\end{definition}

	 $\bbI_N$-bisimulation connects states of a timed FSM with states of an untimed FSM.

	\begin{definition}\label{def:I_n-bisim}
	Given a TFSM with timed guards $T = (S, I, O, \lambda_S, s_0)$, an untimed FSM $U = (R, I\times \bbI_N, O, \lambda_R, r_0)$, an integer $N \geq max(T)$ and the set of intervals $\bbI_N$, an \emph{$\bbI_N$-bisimulation} is a relation $\sim \subseteq S \times R$ that respects the following conditions:
	\begin{compactenum}
		\item for every pair of states $s \in S$ and $r \in R$ such that $s \sim r$, if $(s, i, \langle t,t'\rangle,o,s') \in \lambda_S$, then for every $\langle n,n'\rangle \in \bbI_N$ such that $\langle n,n'\rangle \subseteq \langle t,t'\rangle$ there exists a transition $(r, (i,\langle n,n'\rangle), o, r') \in \lambda_R$ such that $r' \sim s'$;
		\item for every pair of states $s \in S$ and $r \in R$ such that $s \sim r$, if $(r, (i,\langle n,n'\rangle), o, r') \in \lambda_R$ then there exists a transition $(s, i, \langle t,t'\rangle,o,s') \in \lambda_S$ such that $\langle n,n'\rangle \subseteq \langle t,t'\rangle$ and $r' \sim s'$.
	\end{compactenum}
	$T$ and $U$ are  \emph{$\bbI_N$-bisimilar} if there exists an $\bbI_N$-bisimulation $\sim \subseteq S \times R$ such that $s_0 \sim r_0$.
	\end{definition}

	$\bbI_N$-bisimilar machines have the same behavior, as formally proved by the following lemma.

	\begin{lemma}\label{lem:I_N-bisimilar-then-equiv}
	Given a TFSM with timed guards $T = (S, I, O, \lambda_S, s_0)$ and an untimed FSM $U = (R, I\times \bbI_N, O, \lambda_R, r_0)$, if there exists an $\bbI_N$-bisimulation $\sim$ such that $s_0 \sim r_0$ then for every timed input word $v = (i_1,t_1)(i_2,t_2)(i_3,t_3)\dots(i_m,t_m)$ we have that $\Untime(B_T(v)) = B_U(\bbI_N(v))$.
	\end{lemma}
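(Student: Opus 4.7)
The plan is to proceed by induction on the length $m$ of the timed input word $v = (i_1,t_1)(i_2,t_2)\dots(i_m,t_m)$, strengthening the inductive statement so that it also maintains the invariant that the states reached by $T$ and $U$ on corresponding prefixes remain related by $\sim$. Precisely, I would prove: for every timed word $v$ of length $m$, if the unique run of $T$ on $v$ ends in state $s_m$ with output $u$, then the unique run of $U$ on $\bbI_N(v)$ ends in some state $r_m$ with output $\Untime(u)$, and $s_m \sim r_m$. The base case $m=0$ is immediate: both machines sit in their initial states, which are related by hypothesis, and both have produced the empty output.

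For the inductive step, write $v = v' \cdot (i_m,t_m)$ with $v'$ of length $m-1$ and set $\Delta = t_m - t_{m-1}$ (with $t_0 = 0$). By the inductive hypothesis, $T$ reaches some $s_{m-1}$ after producing $u'$, $U$ reaches some $r_{m-1}$ after producing $\Untime(u')$, and $s_{m-1} \sim r_{m-1}$. Since $T$ is complete and deterministic, there is a unique transition $(s_{m-1}, i_m, \langle t,t'\rangle, o_m, s_m) \in \lambda_S$ such that $\Delta \in \langle t,t'\rangle$. Let $\langle n,n'\rangle \in \bbI_N$ be the unique atom of $\bbI_N$ containing $\Delta$; by Definition~\ref{def:abstract-timed-word} this is exactly the time component of the $m$-th symbol of $\bbI_N(v)$. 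Assuming the inclusion $\langle n,n'\rangle \subseteq \langle t,t'\rangle$ (discussed below), condition~1 of $\bbI_N$-bisimulation supplies a transition $(r_{m-1}, (i_m,\langle n,n'\rangle), o_m, r_m) \in \lambda_R$ with $s_m \sim r_m$. Because $U$ is deterministic, this is the transition it actually takes, and so its cumulative output becomes $\Untime(u') \cdot o_m = \Untime(u' \cdot (o_m,t_m))$, closing the induction.

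The only nontrivial step is the inclusion $\langle n,n'\rangle \subseteq \langle t,t'\rangle$, which is a purely arithmetic property of the partition $\bbI_N$ that relies on the assumption $N \geq \max(T)$. Every finite endpoint of any guard in $\lambda_S$ is a nonnegative integer at most $N$, while the atoms of $\bbI_N$ are the singletons $[n,n]$ for $0 \leq n \leq N$, the open unit intervals $(n,n+1)$ for $0 \leq n < N$, and the tail $(N,\infty)$. Each such atom lies entirely between consecutive integers of $\{0,1,\dots,N\}$ or coincides with one of those integers, so it cannot straddle any endpoint of any guard; hence if it shares even a single point with $\langle t,t'\rangle$ it is wholly contained in $\langle t,t'\rangle$. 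With this inclusion secured, the rest of the argument is mechanical: only condition~1 of the bisimulation together with determinism of $U$ is needed for this direction, and condition~2 will become relevant only when proving the converse implication.
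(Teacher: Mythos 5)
Your proof is correct and follows essentially the same route as the paper's: induction on the length of $v$ with the strengthened invariant that the states reached by $T$ on $v$ and by $U$ on $\bbI_N(v)$ remain related by $\sim$, decomposing $v$ into a prefix plus its last timed symbol. The only differences are minor refinements: you make explicit the atom-containment step $\langle n,n'\rangle \subseteq \langle t,t'\rangle$ (valid because $N \geq \max(T)$ forces every guard endpoint to be an integer that no atom of $\bbI_N$ straddles), which the paper leaves implicit, and you prove only the forward implication, which indeed suffices since $U$ is assumed complete and deterministic, whereas the paper phrases its inductive claim as a biconditional.
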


	\begin{proof}
	We prove the lemma by showing that 
\emph{``for every pair of states $s \sim r$ and timed word $v$, $\lambda_S(s,v) = (s',w)$ if and only if $\lambda_R(r,\bbI_N(v)) = (r',\Untime(w))$ with $s' \sim r'$''}.

	We prove the claim by induction on the length $m$ of the input word $v$. Suppose $m = 1, v = (i_1,t_1)$ and $\lambda_S(s, (i_1,t_1)) = (s_1,(o_1,t_1)) = (s_1,w)$. By the definition of TFSM with timed guards, we have that there exists a transition $(s, i, \langle t,t'\rangle, o_1, s_1) \in \lambda_S$ such that $t_1 \in \langle t,t'\rangle$. 
	Since $s \sim r$, by the definition of $\bbI_N$-bisimulation we have that there exists a transition $(r, (i,\langle n_1,n_1'\rangle),o_1,r_1) \in \lambda_R)$ with $t_1 \in \langle n_1,n_1'\rangle$ and $s_1 \sim r_1$. Since $\bbI_N(v) = (i_1,\langle n_1,n_1'\rangle)$, we have that $\lambda_R(r,(i_1,\langle n_1,n_1'\rangle)) = (r_1,o_1)$ and the claim is proved.

	Now, suppose the claim holds for all natural numbers up to $m-1$, and let $v = (i_1,t_1)\dots(i_{m-1},t_{m-1})$ $(i_m,t_m) = v'(i_m,t_m)$. Suppose that $\lambda_S(s,v') = (s_{m-1},w')$ and that $\lambda_S(s_{m-1},(i_m,t_m-t_{m-1})) = (s_m,(o_m,t_m-t_{m-1}))$: by inductive hypothesis, we have that $\lambda_R(r,\bbI_N(v')) = (r_{m-1},\Untime(w'))$ and that $\lambda_R(r_{m-1},$ $\bbI_N(i_m,t_m-t_{m-1})) = (r_m,o_m)$ with $s_{m-1} \sim r_{m-1}$ and $s_m \sim r_m$. This implies that $\lambda_S(s,v) =$ $\lambda_S(s,v'(i_m,t_m)) = (s_m,w'(o_m,t_m))$ $= (s_m, w)$ and $\lambda_R(r,\bbI_N(v)) = \lambda_R(r,\bbI_N(v'(i_m,t_m))) = (r_m,\Untime(w')o_m) =$\linebreak $(r_m,\Untime(w))$ with $s_m \sim r_m$, and thus that the claim holds also for $m$.

	To conclude the proof of the Lemma it is sufficient to recall that from the definition of behaviour we have that $B_T(v) = w$ if and only if $\lambda_S(s_0,v) = (s_m,w)$ for some state $s_m \in S$. From $s_0 \sim r_0$ we can conclude that $\lambda_R(r_0,\bbI_N(v)) = (r_m,\Untime(w))$ and thus that $B_U(\bbI_N(v)) = \Untime(w) = \Untime(B_T(v))$.
	\end{proof}

	\begin{lemma}\label{lem:abstract-I_N-bisim}
	A TFSM with timed guards $M$ is $\bbI_N$-bisimilar to the abstract FSM $A^N_M$.
	\end{lemma}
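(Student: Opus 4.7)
The plan is to exhibit an explicit bisimulation and then verify the two conditions of Definition~\ref{def:I_n-bisim} more or less by unfolding definitions. Since $M = (S, I, O, \lambda_S, s_0)$ and $A^N_M = (S, I \times \bbI_N, O, \lambda_A, s_0)$ share the same state set and the same initial state, the natural candidate is the \emph{identity relation} $\sim\, = \{(s,s) : s \in S\} \subseteq S \times S$. The obligation that $s_0 \sim s_0$ is then immediate.

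For the first clause of the bisimulation, I take a transition $(s, i, \langle t, t'\rangle, o, s') \in \lambda_S$ and an arbitrary $\langle n, n'\rangle \in \bbI_N$ with $\langle n, n'\rangle \subseteq \langle t, t'\rangle$; Definition~\ref{def:abstract-tfsm-tg} directly yields $(s, (i, \langle n, n'\rangle), o, s') \in \lambda_A$, and $s' \sim s'$ holds trivially. For the second clause I reverse this: given $(s, (i, \langle n, n'\rangle), o, s') \in \lambda_A$, Definition~\ref{def:abstract-tfsm-tg} supplies a guarding interval $\langle t, t'\rangle$ and a transition $(s, i, \langle t, t'\rangle, o, s') \in \lambda_S$ with $\langle n, n'\rangle \subseteq \langle t, t'\rangle$, and again $s' \sim s'$. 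Both conditions thus follow by direct appeal to the definition of $A^N_M$.

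The only point that deserves a brief remark before concluding is that the construction of $A^N_M$ is faithful in both directions, i.e.\ the intervals of $\bbI_N$ partition $\bbR^+$ and every guard occurring in $\lambda_S$ decomposes as a finite disjoint union of elements of $\bbI_N$ (this uses $N \geq \max(M)$ and the fact that guard endpoints are nonnegative integers or $\infty$). This guarantees that the abstract transition relation $\lambda_A$ neither loses nor invents behaviours compared to $\lambda_S$, so both clauses of the bisimulation hold with the identity as witness. This is the one observation that might be called an ``obstacle'', but it is essentially bookkeeping rather than a genuine difficulty; the rest is mechanical. Combining this with the identity initial link completes the proof that $M$ and $A^N_M$ are $\bbI_N$-bisimilar.
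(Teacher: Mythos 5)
Your proof is correct and follows exactly the paper's approach: the paper's own proof is the single sentence that the identity relation on $S$ is an $\bbI_N$-bisimulation for $M$ and $A^N_M$, and your argument simply unfolds Definition~\ref{def:abstract-tfsm-tg} to verify the two clauses, which is the intended (and routine) verification.
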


	\begin{proof}
	The identity relation on $S$ is an $\bbI_N$-bisimulation for $M$ and $A^N_M$.
	\end{proof}

	\begin{theorem}\label{th:tfsm-tg-equiv}
	Let $M$ and $M'$ be two TFSM with timed guards and let $N$ be such that $N \geq \max(M)$ and $N \geq \max(M')$.
	Then $M$ and $M'$ are equivalent if and only if the two abstract FSM $A^N_M$ and $A^N_{M'}$ are equivalent.
	\end{theorem}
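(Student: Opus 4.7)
The plan is to reduce the equivalence of $M$ and $M'$ to the equivalence of their abstractions by chaining Lemma~\ref{lem:abstract-I_N-bisim} (each TFSM is $\bbI_N$-bisimilar to its abstraction) with Lemma~\ref{lem:I_N-bisimilar-then-equiv} (bisimilar machines have equal untimed behaviour on $\bbI_N$-abstractions of timed words). The critical observation that makes the round trip work is that in a (complete, deterministic) TFSM with timed guards the $j$-th output symbol carries the same timestamp $t_j$ as the $j$-th input symbol, so the timed output word $B_M(v)$ is completely determined by its sequence of output symbols together with the input timestamps of $v$.

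For the forward direction, I would assume $B_M = B_{M'}$ and pick an arbitrary input word $w = (i_1,\langle n_1,n_1'\rangle)\cdots(i_m,\langle n_m,n_m'\rangle)$ over $I \times \bbI_N$ accepted by $A^N_M$ or $A^N_{M'}$. I would then construct a timed word $v = (i_1,t_1)\cdots(i_m,t_m)$ by choosing timestamps with $t_j - t_{j-1} \in \langle n_j,n_j'\rangle$ (any rational point inside each interval will do), so that $\bbI_N(v) = w$. Applying Lemma~\ref{lem:abstract-I_N-bisim} and then Lemma~\ref{lem:I_N-bisimilar-then-equiv} to both $M$ and $M'$ gives
\[
B_{A^N_M}(w) \;=\; \Untime(B_M(v)) \;=\; \Untime(B_{M'}(v)) \;=\; B_{A^N_{M'}}(w),
\]
which proves $A^N_M \equiv A^N_{M'}$.

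For the backward direction, I would assume $A^N_M \equiv A^N_{M'}$ and take any timed input word $v = (i_1,t_1)\cdots(i_m,t_m)$. Setting $w = \bbI_N(v)$, the two lemmas again yield
\[
\Untime(B_M(v)) \;=\; B_{A^N_M}(w) \;=\; B_{A^N_{M'}}(w) \;=\; \Untime(B_{M'}(v)).
\]
Since both $B_M(v)$ and $B_{M'}(v)$ are timed words whose timestamps coincide with $t_1,\dots,t_m$ (by the convention that outputs are instantaneous with their triggering inputs), equality of the underlying untimed words forces $B_M(v) = B_{M'}(v)$, and hence $B_M = B_{M'}$.

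The only potential obstacle is the realizability step in the forward direction: one has to be sure that for every $\bbI_N$-word $w$ accepted by $A^N_M$ or $A^N_{M'}$ there exists a timed word $v$ with $\bbI_N(v) = w$. This is immediate because each interval in $\bbI_N$ is nonempty in $\bbR^+$ (even the singleton intervals $[n,n]$ are valid delays), so one can always pick legitimate timestamps; the resulting sequence $t_1 < t_2 < \cdots$ is strictly increasing as required by the definition of a timed word, as long as one avoids choosing $0$ inside an interval of the form $(0,1)$. Once this is noted, the rest of the argument is essentially a two-line application of the previous lemmas.
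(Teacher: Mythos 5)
Your proposal is correct and follows essentially the same route as the paper: both directions chain Lemma~\ref{lem:abstract-I_N-bisim} with Lemma~\ref{lem:I_N-bisimilar-then-equiv}, use the surjectivity of $v \mapsto \bbI_N(v)$ for the forward direction, and use the instantaneity of outputs to lift untimed equality back to timed equality. The only cosmetic difference is that the paper phrases the backward direction as a proof by contradiction, whereas you argue it directly.
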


	\begin{proof}
	By Lemma~\ref{lem:abstract-I_N-bisim} we have that $M$ is $\bbI_N$-bisimilar to $A^N_M$ and that $M'$ is $\bbI_N$-bisimilar to $A^N_{M'}$. We first show that if $M$ and $M'$ are equivalent, then $A^N_M$ and $A^N_{M'}$ are equivalent. Under the assumption that $M$ and $M'$ are complete and deterministic, we have that $M$ and $M'$ are equivalent if for every timed input word $v$, $B_M(v) = B_{M'}(v)$. By Lemma~\ref{lem:I_N-bisimilar-then-equiv} we have that $\Untime(B_M(v)) = B_{A^N_M}(\bbI(v))$ and that $\Untime(B_{M'}(v)) = B_{A^N_{M'}}(\bbI(v))$. This implies that $B_{A^N_M}(\bbI(v)) = B_{A^N_{M'}}(\bbI(v))$. Since for every untimed word $w$ over the alphabet $I \times \bbI_N$ it is possible to find a timed word $v$ over the alphabet $I$ such that $w = \bbI_N(v)$, we can conclude that $A^N_M$ and $A^N_{M'}$ are equivalent.

	For the converse implication, suppose by contradiction that $A^N_M$ and $A^N_{M'}$ are equivalent but $M$ and $M'$ are not. This means that there exists a timed input word $v = (i_1,t_1)\ldots(i_m,t_m)$ such that $B_M(v) \neq B_{M'}(v)$. Let $B_M(v) = w = (o_1,t_1)\ldots(o_m,t_m)$ and $B_{M'}(v) = w' = (o_1',t_1)\ldots(o_m',t_m)$.
	Notice that, by the definition of TFSM with timed guards, $w$ and $w'$ must have the same timestamps (outputs are produced istantaneuously), and thus they must differ on the produced output symbols. This implies that there exists at least one index $1 \leq j \leq m$ such that $o_j \neq o_j'$ and hence that $\Untime(w) \neq \Untime(w')$.  By Lemma~\ref{lem:I_N-bisimilar-then-equiv} we have that $B_{A^N_M}(\bbI(v)) = \Untime(w)$ and that $B_{A^N_{M'}}(\bbI(v)) = \Untime(w')$. Hence, $B_{A^N_M}(\bbI(v)) \neq B_{A^N_{M'}}(\bbI(v))$, in contradiction with the hypothesis that $A^N_M$ and $A^N_{M'}$ are equivalent.
	\end{proof}

	\subsection{TFSM with timeouts}\label{sec:TFSM-TO}

	\begin{definition}[TFSM with Timeouts~\cite{Merayo2008,Zhigulin2011}]\label{def:fsm-to}
	A TFSM with timeouts  is a 6-tuple $M = (S, I, O, \lambda_S, s_0, \Delta_S)$
	where $S$, $I$, and $O$ are finite disjoint non-empty sets of states, inputs and outputs, respectively, $s_0$ is the initial state, $\lambda_S \subseteq  S \times I \times  O \times  S$ is a transition relation and  
	$\Delta_S: S \rightarrow S \times \left(\bbNpos \cup \left\{\infty\right\} \right)$ is a \textit{timeout function}.
	\end{definition}

	Transitions of TFSM with timeouts can be triggered not only by the reception of an input, but also by timeouts.
	When the machine enters a state $s$ it resets the clock to $0$. If an input $i$ is received before the timeout $\Delta_S\left(s\right)_{\downarrow \bbN}$ expires and a transition $\left(s, i, o, s' \right) \in$  $S \times I \times  O \times  S$ exists, then the machine produces $o$, moves to state $s'$ while resetting the clock at $s'$ to $0$. If no input is received before the timeout  $\Delta_S\left(s\right)_{\downarrow \bbN}$ expires, then the TFSM will move to the state specified by the timeout function $\Delta_S\left(s\right)_{\downarrow S}$ and reset the clock to $0$.
	If $\Delta_S\left(s\right)_{\downarrow \bbN} = \infty$, then the machine can stay at state $s$ infinitely long waiting for an input. 

	A timed state of a TFSM with timeouts is a pair $(s,x) \in S \times \bbR^+$ with the additional constraint that $x < \Delta_S(s)_{\downarrow \bbN}$ (the value of the clock cannot exceed the timeout). Timed and input/output transitions are defined as follows.

	\begin{compactitem}
		\item The \emph{timed transition} relation $\trans{t}$ is the smallest relation closed under the following properties:
			\begin{compactitem}
				\item for every timed state $(s,x)$ and delay $t \geq 0$, if $x + t < \Delta_S(s)_{\downarrow \bbN}$, then $(s,x) \trans{t} (s, x+t)$;
				\item for every timed state $(s,x)$ and delay $t \geq 0$, if $x + t = \Delta_S(s)_{\downarrow \bbN}$, then $(s,x) \trans{t} (s', 0)$ with $s' = \Delta_S(s)_{\downarrow S}$;
				\item if $(s,x) \trans{t_1} (s',x')$ and $(s',x') \trans{t_2} (s'',x'')$ then $(s,x) \trans{t_1+t_2} (s'',x'')$.
				\end{compactitem}

		\item The \emph{input/output transition} relation $\trans{i,o}$ is such that $(s,x) \trans{i,o} (s',0)$ if and only if $(s,i,o,s') \in \lambda_S$ and $x < \Delta_S(s)_{\downarrow \bbN}$.
	\end{compactitem}

	\noindent The definition of timed run, behavior, complete and deterministic machine given in Section~\ref{sec:TFSM-TG} for TFSM with timed guards can be extended to TFSM with timeouts.


	\paragraph{Equivalence checking of TFSM with timeouts.}

	We solve the equivalence problem for TFSM with timeouts using the same approach we used in Section~\ref{sec:TFSM-TG} for TFSM with timed guards: we reduce the problem to the equivalence of standard FSM by an appropriate notion of ``abstract'' untimed FSM. 

	In the case of a TFSM with timeouts $M$, the constant $\max(M)$ is defined as the greatest timeout value of the function $\Delta_S$ different from $\infty$. States of the abstract FSM will be pairs $(s,n)$ where $s$ is a state of $M$ and $n$ is a natural number ranging from $0$ to $\max(M)-1$ abstracting the clock value. Transitions can be either standard input/output transitions labelled with pairs from $I \times O$ or ``time elapsing'' transitions labelled with the special pair $(\One,\One)$ representing a one time-unit delay without inputs (see also~\cite{Zhigulin2011}).

	\begin{definition}\label{def:abstract-tfsm-to}
	Given a TFSM with timeouts $M = (S, I, O, \lambda_S, s_0, \Delta_S)$, let $N = \max(M) - 1$. We define the \emph{$\One$-abstract FSM} $A_M = (S\times \{0,\dots,N\}, I \cup \{\One\}, O\cup \{\One\}, \lambda_A, (s_0,0))$ as the untimed FSM such that:
	\begin{compactitem}
		\item $(s,n) \trans{\One,\One} (s,n+1)$ if and only if $n+1 < \Delta_S(s)_{\downarrow\bbN} < \infty$;
		\item $(s,n) \trans{\One,\One} (s',0)$ if and only if $\Delta_S(s) = (s',n+1)$;
		\item $(s,0) \trans{\One,\One} (s,0)$ if and only if $\Delta_S(s)_{\downarrow\bbN} = \infty$;
		\item $(s,n) \trans{i,o} (s',0)$ if and only if $(s,i,o,s') \in \lambda_S$.
	\end{compactitem}
	\end{definition}

	\begin{figure}[tbp]
	\centering
	\subfigure[FSM with timeouts]{
	\begin{tikzpicture}[font=\footnotesize,yscale=0.8]
		\node	(M1) at (-1.25,0) {$M$};
		\node[draw,circle] (q0)	 at (0,0)	{$q_0$};
		\node[draw,circle]	(q1) at (2,0)	{$q_1$};
		
		\draw[->] (-0.75,0) -- (q0);
		\path[->] (q0) edge[loop above] node        {$i/o_1$} ();
		\path[->] (q0) edge[bend left]  node[above] {$t=3$} (q1);
		\path[->] (q1) edge[bend left]  node[below] {$t=2$} (q0);
		\path[->] (q1) edge[loop above] node        {$i/o_2$} ();
		\path (0,-2) -- (2,2);
	\end{tikzpicture}\label{fig:One-tfsm-to}}
\qquad
	\subfigure[$\One$-abstract FSM]{
	\begin{tikzpicture}[font=\footnotesize,yscale=0.8]
		\node	(M1) at (-1.25,0) {$A_M$};
		\node[draw,circle] (q00)	 at (0,0)	{$q_0,0$};
		\node[draw,circle] (q01)	 at (2,0)	{$q_0,1$};
		\node[draw,circle] (q02)	 at (4,0)	{$q_0,2$};
		\node[draw,circle]	(q10) at (3,-1.75)	{$q_1,0$};
		\node[draw,circle]	(q11) at (1,-1.75)	{$q_1,1$};
		
		\draw[->] (-0.75,0) -- (q00);
		\path[->] (q00) edge[loop above] node        {$i/o_1$} ();
		\path[->] (q00) edge  node[below] {$\One/\One$} (q01);
		\path[->] (q01) edge[bend right]  node[above] {$i/o_1$} (q00);
		\path[->] (q01) edge  node[below] {$\One/\One$} (q02);
		\path[->] (q02) edge[bend right=60]  node[above] {$i/o_1$} (q00);
		\path[->] (q02) edge  node[right] {$\One/\One$} (q10);
		\path[->] (q10) edge[loop right] node        {$i/o_2$} ();
		\path[->] (q10) edge  node[below] {$\One/\One$} (q11);
		\path[->] (q11) edge[bend left]  node[above] {$i/o_2$} (q10);
		\path[->] (q11) edge  node[left] {$\One/\One$} (q00);
	\end{tikzpicture}\label{fig:One-fsm}}
	\caption{Example of $\One$-abstraction of a TFSM with timeouts.}
  \label{fig:One-abstraction-example}
\end{figure}
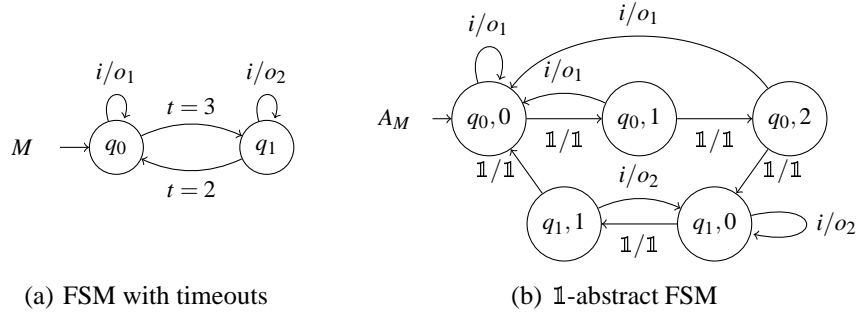

Figure~\ref{fig:One-abstraction-example} shows an example of a TFSM with timeouts and its $\One$-abstraction.
In this case the untimed abstraction accepts untimed input words on $I \cup \{\One\}$ where the delay is implicitly represented by sequences of the special input symbol $\One$ interleaved with the occurrences of the original input symbols from $I$. A sequence of $n$ $\One$ in a row represents a delay $t$ included in the right-open interval $[n,n+1)$. 

To compare timed words with untimed ones, we need to introduce the following notion of abstraction of a timed word. Given a real number $t \in \bbR$, we denote with $\floor{t}$ the integer part of $t$, and with $\One^{\floor{t}}$ a (possibly empty) sequence of $\floor{t}$ delay symbols $\One$.

\begin{definition}\label{def:One-abstract-timed-word}
Given a finite alphabet $A$ and a finite timed word $v = (a_1,t_1)$ $(a_2,t_2)(a_3,t_3)\dots(a_m,t_m)$, we define its \emph{$\One$-abstraction} as the finite word  $\One(v) = \One^{\floor{t_1}}a_1 \One^{\floor{t_2-t_1}} \dots \One^{\floor{t_j-t_{j-1}}} a_j \One^{\floor{t_{j+1}-t_j}}\dots \One^{\floor{t_m-t_{m-1}}} a_m$.
\end{definition}

$\One$-bisimulation connects \emph{timed states $(s,x)$} of a timed FSM with states of an untimed FSM.

\begin{definition}\label{def:One-bisim}
Given a TFSM with timeouts $T = (S, I, O, \lambda_S, s_0, \Delta_S)$ and an untimed FSM $U = (R, I \cup \{\One\}, O \cup \{\One\}, \lambda_R, r_0)$, a \emph{$\One$-bisimulation} is a relation $\sim \subseteq (S\times \bbR^+) \times R$ that respects the following conditions for every pair of states $(s,x) \in S\times \bbR^+$ and $r \in R$ such that $(s,x) \sim r$:
\begin{compactenum}
	\item if $(s,x)\trans{t}(s',x')$ with $\floor{x+t} = \floor{x+1}$ then there exists $r'\in R$ such that $r \trans{\One,\One} r'$ and $(s',x') \sim r'$;
	\item if $r \trans{\One,\One} r'$ then for every $t$ such that $\floor{x+t} = \floor{x+1}$ there exists $(s',x')\in S\times \bbR^+$ such that $(s,x)\trans{t}(s',x')$ and $(s',x') \sim r'$;
	\item if $(s,x)\trans{t}(s,x+t)\trans{i,o}(s',0)$ with $\floor{x} = \floor{x+t}$ then there exists $r'\in R$ such that $r \trans{i,o} r'$ and $(s',0) \sim r'$;
	\item if $r \trans{i,o} r'$ then for every $t \in \bbR$ such that $\floor{x} = \floor{x+t}$ there exists $(s',0)\in S\times \bbR^+$ such that $(s,x)\trans{t}(s,x+t)\trans{i,o}(s',0)$ and $(s',0) \sim r'$.
\end{compactenum}
$T$ and $U$ are  \emph{$\One$-bisimilar} if there exists a $\One$-bisimulation $\sim \subseteq S \times R$ such that $(s_0,0) \sim r_0$.
\end{definition}
To understand the previous conditions, notice that $\floor{x+t} = \floor{x+1}$ 
implies $1 \leq t < 2$ and that $\floor{x} = \floor{x+t}$ implies $0 \leq t < 1$.
Therefore condition 1. refers to a timed transition in $T$ of length
$1 \leq t < 2$, which corresponds to the existence of a 
$\One,\One$-transition in $U$.
Similarly, condition 3. refers in $T$ to a timed transition of length
$0 \leq t < 1$ followed by an input-output transition, which corresponds
to an input-output transition in $U$.
Finally, condition 2. refers to a $\One,\One$-transition in $U$, which
corresponds to timed transitions in $T$ with $1 \leq t < 2$;
condition 4. refers to an input-output transition in $U$, which corresponds
in $U$ to timed transitions with $0 \leq t < 1$ followed by an input-output 
transition.
The timed transitions for $t \geq 2$ are handled by induction in the
following Lemma~\ref{lem:One-bisimilar-time}.

\begin{lemma}\label{lem:One-bisimilar-time}
Given a TFSM with timeouts $T = (S, I, O, \lambda_S, s_0, \Delta_S)$ and an untimed FSM $U = (R, I\cup\{\One\}, O\cup\{\One\}, \lambda_R, r_0)$, every $\One$-bisimulation relation $\sim \subset (S\times\bbR^+)\times R$ respects the following properties for every $(s,0) \sim r$ and $t \geq 1$:
\begin{compactenum}[\it (i)]
	\item if $(s,0) \trans{t} (s',x')$ then there exists $r'$ such that $(s',x') \sim r'$ and $r \trans{\One,\One^{\floor{t}}} r'$;
	\item if $r \trans{\One,\One^{\floor{t}}} r'$  then there exists $(s',x') \sim r'$ such that $(s,0) \trans{t} (s',x')$.
\end{compactenum}
\end{lemma}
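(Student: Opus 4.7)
The plan is to prove (i) and (ii) simultaneously by induction on $k = \floor{t}$. In the base case $k=1$ we have $t \in [1,2)$, so the side conditions $\floor{0+t}=1=\floor{0+1}$ required by the bisimulation hold automatically; part (i) then follows directly from condition~1 of Definition~\ref{def:One-bisim} at $(s,0)\sim r$, and part (ii) follows from condition~2 applied with the specific delay $t$.

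For the inductive step $k \geq 2$, the idea is to peel off a single time unit from the end. For part (i), I will exploit that $T$ is complete and deterministic so that $(s,0)\trans{t}(s',x')$ decomposes uniquely as $(s,0)\trans{t-1}(s_1,x_1)\trans{1}(s',x')$ for some intermediate timed state $(s_1,x_1)$. Since $\floor{t-1}=k-1\geq 1$ and $t-1\geq 1$, the inductive hypothesis yields $r_1$ with $(s_1,x_1)\sim r_1$ and $r\trans{\One,\One^{k-1}}r_1$. Then condition~1 of Definition~\ref{def:One-bisim} applied to the last-unit transition $(s_1,x_1)\trans{1}(s',x')$ --- legitimate because the side condition $\floor{x_1+1}=\floor{x_1+1}$ is a tautology --- extends the $\One$-path to some $r'$ with $(s',x')\sim r'$, yielding $r\trans{\One,\One^k}r'$. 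Part (ii) runs symmetrically: I will choose an intermediate state $r_1$ on a witnessing path of $r\trans{\One,\One^k}r'$, invoke the inductive hypothesis with $t-1$ to realize the prefix $r\trans{\One,\One^{k-1}}r_1$ in $T$ as $(s,0)\trans{t-1}(s_1,x_1)$, and then apply condition~2 at $(s_1,x_1)\sim r_1$ with delay $1$ to realize the last $\One$-step as $(s_1,x_1)\trans{1}(s',x')$; composing the two pieces recovers $(s,0)\trans{t}(s',x')$.

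The main point to keep in mind is that the lemma's hypothesis restricts only the source of the $T$-transition to have clock value $0$, while the intermediate state $(s_1,x_1)$ produced by the decomposition generally has $x_1>0$. This is not a real obstacle because conditions~1 and~2 of Definition~\ref{def:One-bisim} are already phrased for an arbitrary clock value, and the particular split $t=(t-1)+1$ was chosen so that the arithmetic side condition reduces to the tautology $\floor{x_1+1}=\floor{x_1+1}$, sparing us any case analysis on $x_1$. I do not anticipate any further subtleties.
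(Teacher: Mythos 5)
Your proposal is correct and follows essentially the same route as the paper's proof: induction on $\floor{t}$, with the base case handled directly by conditions 1 and 2 of Definition~\ref{def:One-bisim} and the inductive step obtained by splitting the delay as $(t-1)+1$ and applying condition 1 (resp.\ 2) to the final unit step at the intermediate timed state. Your explicit remark that the side condition degenerates to the tautology $\floor{x_1+1}=\floor{x_1+1}$ and that conditions 1--2 apply at arbitrary clock values is exactly the (implicit) justification the paper relies on.
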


\begin{proof}
The proof is by induction on $\floor{t}$.
For the basis of the induction, suppose $\floor{t} = 1$ ($1 \leq t < 2$) and let $(s,0) \sim r$. The two properties are a direct consequence of the definition of $\One$-bisimulation. By condition 1 of Definition~\ref{def:One-bisim}, we have that for every $1 \leq t < 2$, $(s,0) \trans{t} (s',x')$ implies that there exists $r'$ such that $(s',x') \sim r'$ and $r \trans{\One,\One} r'$. By condition 2 of Definition~\ref{def:One-bisim}, we have that for every $1 \leq t < 2$, $r \trans{\One,\One} r'$ implies that there exists $(s',x') \sim r'$ such that $(s,0) \trans{t} (s',x')$.

For the inductive case, suppose that $\floor{t} \geq 2$ and that the Lemma holds for $\floor{t - 1} \geq 1$. Now, let $(s,0) \trans{t} (s',x')$ and consider the timed state $(s'',x'')$ such that $(s,0) \trans{t-1} (s'',x'') \trans{1} (s',x')$. By inductive hypothesis we have that there exists $r''$ such that $(s'',x'') \sim r''$ and $r \trans{\One,\One^{\floor{t-1}}} r''$. By condition 1 of Definition~\ref{def:One-bisim}, we have that there exists $r'$ such that $(s',x') \sim r'$ and $r'' \trans{\One,\One} r'$ and thus that $r \trans{\One,\One^{\floor{t}}} r'$. To prove property \textit{(ii)}, suppose $r \trans{\One,\One^{\floor{t}}} r'$ and consider the state $r''$ such that $r \trans{\One,\One^{\floor{t-1}}} r''\trans{\One,\One} r'$. By inductive hypothesis we have that there exists $(s'',x'') \sim r''$ such that $(s,0) \trans{t-1} (s'',x'')$. By condition 2 of Definition~\ref{def:One-bisim} it is possible to find a state $(s',x')$ such that $(s',x') \sim r'$ and $(s'',x'') \trans{1} (s',x')$. This shows that $(s,0) \trans{t} (s',x')$ and concludes the proof.
\end{proof}

$\One$-bisimilar machines have the same behavior, as formally proved in the following.

\begin{lemma}\label{lem:One-bisimilar-then-equiv}
Given a TFSM with timeouts $T = (S, I, O, \lambda_S, s_0, \Delta_S)$ and an untimed FSM $U = (R, I\cup\{\One\}, O\cup\{\One\}, \lambda_R, r_0)$, if there exists a $\One$-bisimulation $\sim$ such that $(s_0,0) \sim r_0$ then for every timed input word $v = (i_1,t_1)\dots(i_m,t_m)$ we have that $\One(B_T(v)) = B_U(\One(v))$.
\end{lemma}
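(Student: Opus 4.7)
The plan is to mirror the proof of Lemma~\ref{lem:I_N-bisimilar-then-equiv}, but with the more delicate single-step bookkeeping required by timeouts. I will prove by induction on $m$ the strengthened claim: \emph{for every pair $(s,0) \sim r$ and every timed input word $v = (i_1,t_1)\ldots(i_m,t_m)$, if $\lambda_S(s,v) = (s_m, w)$ then $\lambda_R(r, \One(v)) = (r_m, \One(w))$ for some $r_m$ with $(s_m, 0) \sim r_m$, and symmetrically in the converse direction}. Specialising this claim to $s=s_0$, $r=r_0$ yields the lemma via the definitions of $B_T$, $B_U$, and $\One(\cdot)$.

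The heart of the argument is a single-step sublemma: for every $(s,0)\sim r$, input $i$, delay $\delta \geq 0$, if $(s,0)\trans{\delta}(s^*,x^*)\trans{i,o}(s'',0)$ is a run of $T$, then there exist $r^*, r''$ with $r \trans{\One,\One^{\floor{\delta}}} r^* \trans{i,o} r''$ and $(s'',0)\sim r''$, and conversely. To establish the forward direction I split on $\floor{\delta}$. If $\floor{\delta} \geq 1$, Lemma~\ref{lem:One-bisimilar-time}(\textit{i}) yields $r^*$ with $r \trans{\One,\One^{\floor{\delta}}} r^*$ and $(s^*, x^*) \sim r^*$; then, using the trivial decomposition $(s^*,x^*)\trans{0}(s^*,x^*)\trans{i,o}(s'',0)$ (which has $\floor{x^*}=\floor{x^*+0}$), condition~3 of Definition~\ref{def:One-bisim} provides $r''$ with $r^* \trans{i,o} r''$ and $(s'',0)\sim r''$. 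If $\floor{\delta} = 0$, i.e.\ $0 \leq \delta < 1$, then because timeouts live in $\bbNpos$ no timeout can fire in $[0,\delta]$ starting from the freshly reset clock value $0$, so $(s,0)\trans{\delta}(s,\delta)\trans{i,o}(s'',0)$ with $\floor{0}=\floor{\delta}=0$, and condition~3 applied with $x=0$ gives $r \trans{i,o} r''$ directly, so we take $r^* = r$ and the empty string $\One^0$. The converse direction is entirely symmetric, invoking Lemma~\ref{lem:One-bisimilar-time}(\textit{ii}) together with conditions~2 and~4 of Definition~\ref{def:One-bisim}.

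With the sublemma in hand the outer induction is routine. The base $m=1$ is the sublemma applied to $(s_0,0)\sim r_0$ with $\delta = t_1$. For the inductive step, write $v = v'(i_m,t_m)$ and let $\lambda_S(s_0,v') = (s_{m-1}, w')$; the inductive hypothesis gives $\lambda_R(r_0, \One(v')) = (r_{m-1}, \One(w'))$ with $(s_{m-1},0)\sim r_{m-1}$, and the sublemma applied to this pair with delay $t_m - t_{m-1}$ appends the block $\One^{\floor{t_m-t_{m-1}}} o_m$, which is precisely the contribution of $(o_m,t_m)$ to $\One(w' (o_m,t_m)) = \One(w)$, and reaches an $r_m$ with $(s_m,0)\sim r_m$.

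The main obstacle is the sub-integer bookkeeping inside the sublemma: we must verify that condition~3 of Definition~\ref{def:One-bisim} can be legitimately invoked with $t=0$ on the tail of the timed transition, and that the decomposition of $\trans{\delta}$ into an integer-length prefix handled by Lemma~\ref{lem:One-bisimilar-time} and a strictly sub-unit suffix handled by condition~3 never crosses an additional timeout boundary unaccounted for. The first point is immediate from the wording of condition~3; the second follows from the fact that after the integer-length prefix the clock has been reset at most one time-unit ago, so no further timeout (whose value is at least~$1$) can trigger before the input is consumed.
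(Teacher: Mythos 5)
Your proposal is correct and follows essentially the same route as the paper's proof: induction on the length of the word, with the single input step handled by a case split on whether the delay is below $1$ (conditions 3/4 of Definition~\ref{def:One-bisim} applied directly) or at least $1$ (Lemma~\ref{lem:One-bisimilar-time} for the integer part of the delay, followed by condition 3/4 with $t=0$). Packaging the base case as a reusable single-step sublemma, and spelling out why no timeout can fire during a sub-unit delay from a freshly reset clock, are minor presentational refinements over the paper's argument rather than a different approach.
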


\begin{proof}
We prove the lemma by showing that 
\emph{
`` for every pair of states $s \in S$ and $r \in R$ such that $(s,0) \sim r$ and timed word $v$, $\lambda_S(s,v) = (s',w)$ if and only if $\lambda_R(r,\One(v)) = (r',\One(w))$ with $(s',0) \sim r'$''}.

We prove the claim by induction on the length $m$ of the input word. Suppose $m = 1$, $v = (i_1,t_1)$ and $w = (o_1,t_1)$. By the definition of TFSM with timeouts we have that $\lambda_S(s, (i_1,t_1)) = (s_1,(o_1,t_1))$ if and only if there exists a timed state $(s',x')$ such that $(s,0) \trans{t_1} (s',x') \trans{i_1,o_1} (s_1,0)$.
To prove the direct implication, suppose $\lambda_S(s,v)=(s',w)$.
We distinguish between two cases depending on the value of $t_1$.

\begin{compactitem}
	\item If $t_1 < 1$, by condition 3. of the definition of $\One$-bisimulation, there exists $r_1 \in R$ such that $r \trans{i_1,o_1} r_1$. Hence, $\lambda_R(r,\One(i_1,t_1)) = (r_1,\One(o_1,t_1))$.
	
	\item If $t_1 \geq 1$, by Lemma~\ref{lem:One-bisimilar-time} (i), there exists $r'$ such that $r \trans{\One,\One^\floor{t_1}} r'$ and $(s',x') \sim r'$. By condition 3. of the definition of $\One$-bisimulation, we have that it is possible to find a state $r_1 \in R$ such that $r' \trans{i_1,o_1} r_1$. This implies that under input $\One^\floor{t_1} i_1 = \One(i_1,t_1)$ the FSM $U$ produces the output word $\One^\floor{t_1} o_1 = \One(o_1,t_1)$, and thus we can conclude that $\lambda_R(r,\One(i_1,t_1)) = (r_1,\One(o_1,t_1))$.
\end{compactitem}

To prove the converse implication, suppose $\lambda_R(r,\One(i_1,t_1)) = (r_1,o_1) = (r_1,\One(o_1,t_1))$. We distinguish between two cases depending on the value of $t_1$.
\begin{compactitem}
	\item If $t_1 < 1$, by condition 4. of the definition of $\One$-bisimulation (applied with $t = 0$), there exists $(s_1,0) \in S \times \bbR$ such that $(s,0) \trans{t_1} (s,t_1) \trans{i_1,o_1} (s_1,0)$. Hence, $\lambda_S(s, (i_1,t_1)) = (s_1,(o_1,t_1))$.	
	\item If $t_1 \geq 1$, then by the assumption $\lambda_R(r,\One(i_1,t_1)) = (r_1,o_1) = (r_1,\One(o_1,t_1))$ there exists $r' \in R$ such that $r \trans{\One,\One^\floor{t_1}} r' \trans{i_1,o_1} r_1$. By Lemma~\ref{lem:One-bisimilar-time}(ii), there exists $(s',x')\in S \times \bbR$ such that $(s,0) \trans{t_1} (s',x')$ and $(s',x') \sim r'$. By condition 4. of the definition of $\One$-bisimulation (applied with $t = 0$), we have that there exists a timed state $(s_1,0)$ such that $(s',x') \trans{i_1,o_1} (s_1,0)$. This implies that under input $(i_1,t_1)$ the TFSM $T$ produces the timed output word $(o_1,t_1)$, and thus we can conclude that $\lambda_S(s, (i_1,t_1)) = (s_1,(o_1,t_1))$.	
\end{compactitem}

\medskip

To prove the inductive case, suppose $m > 1$, $v = (i_1,t_1)\ldots(i_m,t_m)$ and $w = (i_1,t_1)\ldots(i_m,t_m)$. Now, let $v'=(i_1,t_1)\ldots(i_{m-1},t_{m-1})$ and $w' = (o_1,t_1)\ldots$ $(o_{m-1},t_{m-1})$. By inductive hypothesis, we have that $\lambda_S(s,v') = (s_{m-1},w')$ if and only if $\lambda_R(r,\One(v')) = (r_{m-1},\One(w'))$ and that $\lambda_S(s_{m-1},(i_m,t_m-t_{m-1})) = (s_m,(o_m,t_m-t_{m-1}))$ if and only if $\lambda_R(r_{m-1},\One(i_m,t_m-t_{m-1})) = (r_m,\One(o_m,t_m-t_{m-1}))$ for some $(s_{m-1},0) \sim r_{m-1}$ and $(s_m,0) \sim r_m$. This implies that $\lambda_S(s,v'(i_m,t_m)) = (s_m,w'(o_m,t_m))$ if and only if $\lambda_R(r,\One(v)) = \lambda_R(r,\One(v'(i_m,t_m))) = (r_m,\One(w')\One(o_m,t_m-t_{m-1})) = (r_m,\One(w))$, and thus that the claim holds also for $m$.

To conclude the proof of the Lemma it is sufficient to recall that from the definition of behaviour we have that $B_T(v) = w$ if and only if $\lambda_S(s_0,v) = (s_m,w)$ for some state $s_m \in S$. From $(s_0,0) \sim r_0$ we can conclude that $\lambda_R(r_0,\One(v)) = (r_m,\One(w))$ and thus that $B_U(\One(v)) = \One(w) = \One(B_T(v))$.
\end{proof}

\begin{lemma}\label{lem:abstract-One-bisim}
A TFSM with timeouts $M$ is $\One$-bisimilar to the abstract FSM $A_M$.
\end{lemma}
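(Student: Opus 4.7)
The plan is to exhibit an explicit $\One$-bisimulation relating timed states of $M$ to states of $A_M$ and then verify the four conditions of Definition~\ref{def:One-bisim}. Define
$$\sim \;=\; \bigl\{\bigl((s,x),(s,n)\bigr) : n = \floor{x} \text{ if } \Delta_S(s)_{\downarrow\bbN} < \infty, \text{ and } n = 0 \text{ if } \Delta_S(s)_{\downarrow\bbN} = \infty\bigr\}.$$
This is consistent with the state set $S \times \{0,\dots,N\}$ of $A_M$ because the constraint $x < \Delta_S(s)_{\downarrow\bbN}$ on timed states forces $\floor{x} \leq N = \max(M)-1$ in the finite-timeout case. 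The initial correspondence $(s_0,0) \sim (s_0,0)$ is immediate from $\floor{0} = 0$.

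For conditions 1 and 2 (time elapse), the hypothesis $\floor{x+t} = \floor{x+1}$ forces $x+t \in [\floor{x}+1, \floor{x}+2)$, so the clock crosses exactly one integer boundary. I would case-split according to which of the first three rules of Definition~\ref{def:abstract-tfsm-to} governs the $\One,\One$ transition from $(s,\floor{x})$. When $\floor{x}+1 < \Delta_S(s)_{\downarrow\bbN} < \infty$, no timeout fires during the delay and the destination $(s, x+t)$ is $\sim$-related to $(s, \floor{x}+1)$. When $\floor{x}+1 = \Delta_S(s)_{\downarrow\bbN}$, the timeout fires exactly once and the closure rule of the timed transition relation yields $(s,x) \trans{t} (\Delta_S(s)_{\downarrow S}, x+t - \Delta_S(s)_{\downarrow\bbN})$, whose fractional second coordinate lies in $[0,1)$ and is therefore $\sim$-related to $(\Delta_S(s)_{\downarrow S}, 0)$. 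When $\Delta_S(s)_{\downarrow\bbN} = \infty$, the delay simply advances the clock and $(s, x+t) \sim (s, 0)$ by the infinite-timeout clause. In each subcase the matching abstract transition is supplied by the corresponding rule of Definition~\ref{def:abstract-tfsm-to}, and the converse direction required by condition 2 follows by the symmetric argument.

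For conditions 3 and 4 (input/output), the hypothesis $\floor{x+t} = \floor{x}$ forces $t \in [0, \floor{x}+1-x)$, hence $x+t < \floor{x}+1 \leq \Delta_S(s)_{\downarrow\bbN}$ in the finite-timeout case (and no constraint is needed otherwise). This guarantees that the input/output transition $(s,x+t) \trans{i,o} (s',0)$ is enabled in $M$ if and only if the corresponding abstract transition $(s,\floor{x}) \trans{i,o} (s',0)$ is enabled in $A_M$, which by the fourth rule of Definition~\ref{def:abstract-tfsm-to} amounts to $(s,i,o,s') \in \lambda_S$; the post-state relation $(s',0) \sim (s',0)$ holds by definition regardless of the nature of $\Delta_S(s')_{\downarrow\bbN}$. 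The main obstacle is the timeout-firing subcase of conditions 1 and 2, where the closure rule of the timed transition must be invoked to split the delay at the timeout instant and the resulting destination must be correctly matched to the second rule of Definition~\ref{def:abstract-tfsm-to}; a secondary subtlety is that the bisimulation must collapse all clock values to $n = 0$ whenever the timeout is infinite, since $A_M$ has only the self-loop $(s,0) \trans{\One,\One} (s,0)$ at such states rather than a chain of distinct $\One,\One$ successors.
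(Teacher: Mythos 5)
Your relation is exactly the one the paper exhibits, namely $\sim = \{((s,x),(s,\floor{x})) \mid \Delta_S(s)_{\downarrow\bbN} < \infty\} \cup \{((s,x),(s,0)) \mid \Delta_S(s)_{\downarrow\bbN} = \infty\}$, and your case analysis correctly verifies the four bisimulation conditions that the paper leaves to the reader. The proposal is correct and takes essentially the same approach, only spelled out in more detail.
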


\begin{proof}
The relation $\sim = \{((s,x),(s,n))\mid \Delta_S(s)_{\downarrow\bbN} < \infty$ and $\floor{x} = n\} \cup \{((s,x),(s,0)) \mid \Delta_S(s)_{\downarrow\bbN} = \infty\}$ is a $\One$-bisimulation for $M$ and $A_M$.
\end{proof}

\begin{theorem}\label{th:tfsm-tos-equiv}
Let $M$ and $M'$ be two TFSM with timeouts.
Then $M$ and $M'$ are equivalent if and only if the two abstract FSM $A_M$ and $A_{M'}$ are equivalent.
\end{theorem}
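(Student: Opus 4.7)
The plan is to follow the three-step template of Theorem~\ref{th:tfsm-tg-equiv}, with Lemmas~\ref{lem:One-bisimilar-then-equiv} and~\ref{lem:abstract-One-bisim} playing the roles of the $\bbI_N$-counterparts. First, Lemma~\ref{lem:abstract-One-bisim} supplies a $\One$-bisimulation between $M$ and $A_M$ and between $M'$ and $A_{M'}$. Feeding these into Lemma~\ref{lem:One-bisimilar-then-equiv} yields the two identities $\One(B_M(v)) = B_{A_M}(\One(v))$ and $\One(B_{M'}(v)) = B_{A_{M'}}(\One(v))$ for every timed input word $v$, which will drive both directions.

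For the ``only if'' direction, I would assume $B_M = B_{M'}$, apply $\One$ to both sides, and use the two identities to conclude $B_{A_M}(\One(v)) = B_{A_{M'}}(\One(v))$ for every timed input word $v$. For the ``if'' direction, I would argue by contradiction: supposing $A_M \equiv A_{M'}$ but $B_M(v) \neq B_{M'}(v)$ for some $v = (i_1,t_1)\dots(i_m,t_m)$, the two timed output words share the same timestamps (since outputs are instantaneous) and hence must differ in some symbol $o_j \neq o_j'$; this discrepancy survives the $\One$-abstraction in the position corresponding to $o_j$, so $\One(B_M(v)) \neq \One(B_{M'}(v))$, which via the two identities contradicts $B_{A_M}(\One(v)) = B_{A_{M'}}(\One(v))$.

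The main obstacle I expect, not present in the $\bbI_N$ case, is a surjectivity gap: not every word in $(I \cup \{\One\})^*$ is of the form $\One(v)$, because the $\One$-abstraction encodes time elapsing inside the input alphabet and a word may end with stray $\One$ symbols after the last genuine input. To turn $B_{A_M}(\One(v)) = B_{A_{M'}}(\One(v))$ into full equivalence of the abstract FSMs, I would decompose an arbitrary $w \in (I \cup \{\One\})^*$ as $w = w_0 \One^k$ where $w_0$ is empty or ends in a symbol of $I$, so that $w_0 = \One(v)$ for some timed input word $v$. Inspecting Definition~\ref{def:abstract-tfsm-to} shows that every $\One$-labelled transition of an abstract machine produces $\One$ as output, so $B_{A_M}(w) = B_{A_M}(w_0)\,\One^k$ and similarly for $A_{M'}$; equality on $w_0$ therefore lifts to equality on $w$. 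Closing this gap is the only genuinely new step beyond the template already used for timed guards.
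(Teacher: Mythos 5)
Your argument is correct and follows the route the paper intends: Theorem~\ref{th:tfsm-tos-equiv} is stated without proof, the reader being expected to transplant the proof of Theorem~\ref{th:tfsm-tg-equiv} with Lemmas~\ref{lem:One-bisimilar-then-equiv} and~\ref{lem:abstract-One-bisim} in place of their $\bbI_N$-counterparts, which is exactly what you do. The one place where the transplant is not verbatim is the surjectivity step: in the timed-guards case every word over $I\times\bbI_N$ equals $\bbI_N(v)$ for some timed word $v$, whereas a word over $I\cup\{\One\}$ ending in $\One$ symbols is not of the form $\One(v)$. You correctly identify this gap and close it by writing $w=w_0\One^k$ and observing from Definition~\ref{def:abstract-tfsm-to} that every reachable state of the abstract machine has an outgoing $\One$-transition producing output $\One$, so that $B_{A_M}(w_0\One^k)=B_{A_M}(w_0)\One^k$; this is a genuine addition that the template of Theorem~\ref{th:tfsm-tg-equiv} does not supply. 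Your converse direction is also sound: since the machines are complete and outputs are instantaneous, $B_M(v)$ and $B_{M'}(v)$ carry identical timestamps, hence identical blocks of $\One$ symbols under abstraction, so a discrepancy in some output symbol $o_j\neq o_j'$ persists in $\One(B_M(v))$ versus $\One(B_{M'}(v))$ and contradicts the equivalence of $A_M$ and $A_{M'}$ via Lemma~\ref{lem:One-bisimilar-then-equiv}.
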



\subsection{TFSM with timeouts and timed guards}\label{sec:TFSM-TGTO}

In this paper we define a new timed FSM model that incorporates both guards and timeouts. 
Informally, each state of the machine has a timeout (possibly $\infty$) and all outgoing transitions of the state have timed guards with upper bounds less than the state timeout. 
As in the other models described above, time is set to zero when executing a transition. 
 
\begin{definition}[Timed FSM]\label{def:tfsm-all}
A timed FSM $\cal S$ is a finite state machine augmented with timed guards and timeouts. Formally, a timed FSM (TFSM) is a 6-tuple $(S, I, O, \lambda_S, s_0, \Delta_S)$
where $S$, $I$, and $O$ are finite disjoint non-empty sets of states, inputs and outputs, respectively, $s_0$ is the initial state,
$\lambda_S \subseteq  S \times \left(I \times \Pi \right) \times  O \times  S$ 
is a transition relation where $\Pi$ is the set of input timed guards, and
$\Delta_S: S \rightarrow S \times \left( N \cup \left\{ \infty \right\} \right)$
is a \textit{timeout function}.
\end{definition}

Similarly to FSMs with timeouts, if no input is applied at a current state $s$ before the timeout 
$\Delta_S\left(s\right)_{\downarrow \bbN}$ expires, then the TFSM will move to anther state 
$\Delta_S\left(s\right)_{\downarrow S}$ as prescribed by the timeout function. 
If $\Delta_S\left(s\right)_{\downarrow \bbN} = \infty$, then the TFSM can stay at state $s$ infinitely long waiting for an input. Similarly to FSMs with timed guards, an input/output transition can be triggered only if the value of the clock is inside the guard $\langle t_{min}, t_{max}\rangle$ labeling the transition.  
Timed transitions are thus defined as for TFSM with timeouts, while input/output transitions are defined as for TFSM with timed guards:

\begin{compactitem}
	\item The \emph{timed transition} relation $\trans{t}$ is the smallest relation closed under the following properties:
		\begin{compactitem}
			\item for every timed state $(s,x)$ and delay $t \geq 0$, if $x + t < \Delta_S(s)_{\downarrow \bbN}$, then $(s,x) \trans{t} (s, x+t)$;
			\item for every timed state $(s,x)$ and delay $t \geq 0$, if $x + t = \Delta_S(s)_{\downarrow \bbN}$, then $(s,x) \trans{t} (s', 0)$ with $s' = \Delta_S(s)_{\downarrow S}$;
			\item if $(s,x) \trans{t_1} (s',x')$ and $(s',x') \trans{t_2} (s'',x'')$ then $(s,x) \trans{t_1+t_2} (s'',x'')$.
			\end{compactitem}

	\item The \emph{input/output transition} relation $\trans{i,o}$ is such that $(s,x) \trans{i,o} (s',0)$ if and only if there exists $(s,i,\langle t_{min}, t_{max}\rangle,o,s') \in \lambda_S$ such that $x \in \langle t_{min}, t_{max}\rangle$.
\end{compactitem}

\paragraph{Equivalence checking of TFSM with timeouts and timed guards.}

We can solve the equivalence problem for TFSM with timeouts and timed guards by combining the techniques used for TFSM with timeouts (Section~\ref{sec:TFSM-TO}) and for TFSM with timed guards (Section~\ref{sec:TFSM-TG}). 
To incorporate the effect of guards in the abstract untimed FSM, we have to use a finer granularity for the ``time elapsing'' transitions, which are now labelled with the special pair $(\Half,\Half)$, which intuitively represents a time delay $0 < t^* < 1$ without inputs. In the case of a TFSM with timeouts and timed guards $M$, the constant $\max(M)$ is defined as the maximum between the greatest timeout value of the function $\Delta_S$ (different from $\infty$) and the greatest integer constant (different from $\infty$) appearing in the guards of $\lambda_S$. States of the abstract FSM will be pairs $(s,\langle n,n'\rangle)$ where $s$ is a state of $M$ and $\langle n,n'\rangle$ is either a point-interval $[n,n]$ or an open interval $(n,n+1)$ from the set 
$\mathbb{I}_N$ defined in~\ref{sec:TFSM-TG} for the abstraction of TFSM with timed guards.

\begin{definition}\label{def:abstract-tfsm-all}
Given a TFSM with timeouts and timed guards $M = (S, I, O, \lambda_S, s_0, \Delta_S)$, let $N = \max(M)$. We define the \emph{$\Half$-abstract FSM} $A_M = (S\times \mathbb{I}_N, I \cup \{\Half\}, O\cup \{\Half\}, \lambda_A, (s_0,[0,0]))$ as the untimed FSM such that:
\begin{compactitem}
	\item $(s,[n,n]) \trans{\Half,\Half} (s,(n,n+1))$ if and only if $n+1 \leq \Delta_S(s)_{\downarrow\bbN}$;
	\item $(s,(n,n+1)) \trans{\Half,\Half} (s,[n+1,n+1])$ if and only if $n+1 < \Delta_S(s)_{\downarrow\bbN}$;
	\item $(s,(n,n+1)) \trans{\Half,\Half} (s',[0,0])$ if and only if $\Delta_S(s) = (s',n+1)$;
	\item $(s,(N,\infty)) \trans{\Half,\Half} (s,(N,\infty))$ if and only if $\Delta_S(s)_{\downarrow\bbN} = \infty$;
	\item $(s,\langle n,n'\rangle) \trans{i,o} (s',[0,0])$ if and only if there exists $(s,i,\langle t,t'\rangle,o,s') \in \lambda_S$ such that $\langle n,n'\rangle \subseteq \langle t,t'\rangle$.
\end{compactitem}
\end{definition}

\begin{figure}[tbp]
\centering
	\subfigure[TFSM with timeouts and timed guards $M$]{
	\begin{tikzpicture}[font=\footnotesize,scale=0.8]
\node[draw,circle,minimum size=20pt] (q0)	 at (0,0)	{$s_0$};
\node[draw,circle,minimum size=20pt]	(q1) at (3.5,0)	{$s_1$};
		
\draw[->] (-0.5,0.5) -- (q0);
\path[->] (q0) edge[loop above] node        {$[0,1):i/o_1$} ();
\path[->] (q0) edge[bend left]  node[above] {$t = 1$} (q1);
\path[->] (q1) edge[bend left]  node[below] {$(1,\infty):i/o_1$} (q0);
\path[->] (q1) edge[loop above] node        {$[0,1]:i/o_2$} ();
\path (-2,-2.25)--(6,2.25);
	\end{tikzpicture}\label{fig:tfsm-all-ex}}
\qquad
	\subfigure[$\Half$-Abstract untimed FSM $A_M$]{
\begin{tikzpicture}[yscale=0.75,xscale=1.1,font=\footnotesize]
		\node[draw,circle,minimum size=20pt,text width=4ex,text centered] (q00)	 at (-4,2.25)	{$s_0$ \\ $[0,0]$};
		\node[draw,circle,minimum size=20pt,text width=4ex,text centered] (q005)	 at (-2,2.25)	{$s_0$ \\ $(0,1)$};
		\node[draw,circle,minimum size=20pt,text width=4ex,text centered]	(q10) at (0,2.25)	{$s_1$ \\ $[0,0]$};
		\node[draw,circle,minimum size=20pt,text width=4ex,text centered]	(q105) at (1,0)	{$s_1$ \\ $(0,1)$};
		\node[draw,circle,minimum size=20pt,text width=4ex,text centered]	(q11) at (-1,0)	{$s_1$ \\ $[1,1]$};
		\node[draw,circle,minimum size=20pt,text width=4ex,text centered]	(q115) at (-3,0)	{$s_1$ \\ $(1,\infty)$};
		
		\draw[->] (-4.5,3.25) -- (q00);
		\path[->] (q00) edge[loop above] node        {$i/o_1$} ();
		\path[->] (q00) edge node[below]  {$\Half/\Half$} (q005);
		\path[->] (q005) edge[bend right] node[above]        {$i/o_1$} (q00);
		\path[->] (q005) edge node[above] {$\Half/\Half$} (q10);
		\path[->] (q10) edge[loop above] node  {$i/o_2$} ();
		\path[->] (q10) edge node[left] {$\Half/\Half$} (q105);
		\path[->] (q105) edge[bend right] node[right]  {$i/o_2$} (q10);
		\path[->] (q105) edge node[below] {$\Half/\Half$} (q11);
		\path[->] (q11) edge[bend left] node[near start,left]  {$i/o_2$} (q10);
		\path[->] (q11) edge node[below] {$\Half/\Half$} (q115);
		\path[->] (q115) edge[bend left] node[left] {$i/o_1$} (q00);
		\path[->] (q115) edge[loop left] node {$\Half/\Half$} (q115);
			\end{tikzpicture}\label{fig:tfsm-all-ab}}

	\caption{{$\Half$-abstraction} of TFSM with timeout and timed guards.}
  \label{fig:tfsm-all-abstraction}
\end{figure}

Figure~\ref{fig:tfsm-all-abstraction} shows an example of a TFSM with timeouts and its $\Half$-abstraction. In this case the untimed abstraction accepts untimed input words on $I \cup \{\Half\}$. As for TFSM with timeouts, the delay is implicitly represented by sequences of the special input symbol $\Half$ interleaving the occurrences of the real input symbols from $I$. In this case the representation of delays is more involved:
\begin{compactitem}
	\item an \emph{even number} $2n$ of $\Half$ symbols represents a delay of \emph{exactly} $n$ time units;
	\item an \emph{odd number} $2n + 1$ of $\Half$ symbols represents a delay $t$ included in the open interval $(n,n+1)$.
\end{compactitem}
The notion of abstraction of a timed word captures the above intuition. 

\begin{definition}\label{def:Half-abstract-timed-word}
Let $\Half(t)$ be a function mapping a delay $t \in \bbR$ to a sequence of $\Half$ as follows:
$\Half(t) = 	\Half^{2t}$ if $\floor{t} = t$,  $\Half(t) = \Half^{2\floor{t} + 1}$ otherwise.
Given a finite alphabet $A$ and a finite timed word $v = (a_1,t_1)$ $(a_2,t_2)(a_3,t_3)\dots(a_m,t_m)$, we define its \emph{$\Half$-abstraction} as the finite word  $\Half(v) = \Half(t_1)a_1 \Half(t_2-t_1) \dots \Half(t_j-t_{j-1}) a_j \Half(t_{j+1}-t_j)\dots \Half(t_{m-1}-t_m) a_m$.
\end{definition}

The definition of $\Half$-bisimulation is similar to the one of $\One$-bisimulation. As in Definition~\ref{def:One-bisim}, conditions 1. and 2. formalize the connection between timed transitions and the special symbol $\Half$. The finer granularity of the time delays allows us to simplify conditions 3. and 4.: differently from Definition~\ref{def:One-bisim}, we do not need to consider timed transitions $(s,x) \trans{t} (s,x+t)$ before the actual input/output transition. 

\begin{definition}\label{def:Half-bisim}
Given a TFSM with timed guards and timeouts  $T = (S, I, O, \lambda_S, s_0, \Delta_S)$ and an untimed FSM $U = (R, I \cup \{\Half\}, O \cup \{\Half\}, \lambda_R, r_0)$, a \emph{$\Half$-bisimulation} is a relation $\sim \subseteq (S\times \bbR^+) \times R$ that respects the following conditions for every pair of states $(s,x) \in S\times \bbR^+$ and $r \in R$ such that $(s,x) \sim r$:
\begin{compactenum}
	\item if $(s,x)\trans{t}(s',x')$ with $0 < t < 1$ and either $x \in \mathbb{N}$ or $x + t \in \mathbb{N}$ then there exists $r'\in R$ such that $r \trans{\Half,\Half} r'$ and $(s',x') \sim r'$;
	\item if $r \trans{\Half,\Half} r'$ then for every $0 < t < 1$ such that either $x \in \mathbb{N}$ or $x + t \in \mathbb{N}$ there exists $(s',x')\in S\times \bbR^+$ such that $(s,x)\trans{t}(s',x')$ and $(s',x') \sim r'$;
\item if $(s,x)\trans{i,o}(s',0)$ then there exists $r'\in R$ such that $r \trans{i,o} r'$ and $(s',0) \sim r'$;	
	\item if $r \trans{i,o} r'$ then there exists $(s',0)\in S\times \bbR^+$ such that $(s,x)\trans{i,o}(s',0)$ and $(s',0) \sim r'$.
\end{compactenum}
$T$ and $U$ are  \emph{$\Half$-bisimilar} if there exists a $\Half$-bisimulation $\sim \subseteq S \times R$ such that $(s_0,0) \sim r_0$.
\end{definition}

The following lemma proves that $\Half$-bisimilar machines have the same behavior.

\begin{lemma}\label{lem:Half-bisimilar-then-equiv}
Given a TFSM with timeouts $T = (S, I, O, \lambda_S, s_0, \Delta_S)$ and an untimed FSM $U = (R, I\cup\{\Half\}, O\cup\{\Half\}, \lambda_R, r_0)$, if there exists a $\Half$-bisimulation $\sim$ such that $(s_0,0) \sim r_0$ then for every timed input word $v = (i_1,t_1)\dots(i_m,t_m)$ we have that $\Half(B_T(v)) = B_U(\Half(v))$.
\end{lemma}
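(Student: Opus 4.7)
The plan is to follow the scheme of the proof of Lemma~\ref{lem:One-bisimilar-then-equiv}, adapted to the finer half-step granularity of Definition~\ref{def:Half-bisim}. I would first state and prove an analogue of Lemma~\ref{lem:One-bisimilar-time} for $\Half$-bisimulations: for every $(s,0)\sim r$ and every $t \geq 0$, a timed transition $(s,0) \trans{t}(s',x')$ of $T$ corresponds to a path in $U$ reading input word $\Half(t)$ and producing output word $\Half(t)$, ending in some $r'$ with $(s',x')\sim r'$, and conversely. The natural inductive parameter is the length of $\Half(t)$, which equals $2t$ when $t$ is an integer and $2\floor{t}+1$ otherwise, and which counts how many applications of the bisimulation conditions are needed.

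The base cases of this time-analogue lemma follow from conditions 1 and 2 of Definition~\ref{def:Half-bisim}: starting from $(s,0)$ with $x=0 \in \mathbb{N}$, any $0 < t < 1$ satisfies the integer-boundary premise and yields a single $\Half,\Half$-transition. For the inductive step I would split off the last half-step: if $\floor{2t}$ is odd (so $t$ is a positive integer), decompose $(s,0)\trans{t}(s',x')$ as $(s,0)\trans{t-1/2}(s^*,x^*)\trans{1/2}(s',x')$ where $x^*$ is non-integer and $x^*+1/2 = t \in \mathbb{N}$; if $\floor{2t}$ is even, split off an initial segment of length strictly less than $1$ ending at an integer time. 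In each case one endpoint of the trailing half-step is integer, so condition 1 (respectively condition 2, for the converse direction) applies, while the prefix is handled by the inductive hypothesis. This ensures the $\lceil 2t \rceil$ half-steps alternate between integer and non-integer endpoints, matching the $[n,n]\to(n,n+1)\to[n+1,n+1]$ structure encoded by Definition~\ref{def:abstract-tfsm-all}.

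Once this time-analogue lemma is available, I would prove the main claim by induction on $m$: \emph{for every $(s,0)\sim r$ and timed word $v$, $\lambda_S(s,v) = (s',w)$ if and only if $\lambda_R(r,\Half(v)) = (r',\Half(w))$ with $(s',0)\sim r'$.} The base case $m = 1$ with $v=(i_1,t_1)$ factors $\lambda_S(s,v)$ through an intermediate timed state $(s^*,x^*)$ via $(s,0)\trans{t_1}(s^*,x^*)\trans{i_1,o_1}(s',0)$; the time-analogue lemma matches the delay with a $\Half(t_1)$-labelled path in $U$ ending at some $r^*$ with $(s^*,x^*)\sim r^*$, and conditions 3 and 4 of Definition~\ref{def:Half-bisim} match the input/output step with a direct $(i_1,o_1)$ transition of $U$ from $r^*$. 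The converse invokes the converse halves of the same tools, using the fact that any $\Half(v)$-labelled path of $U$ factors uniquely into a $\Half(t_1)$-prefix followed by an $(i_1,o_1)$-transition (since $\Half$ is a fresh symbol distinct from every $i\in I$). The inductive step splits $v = v'(i_m,t_m)$, applies the inductive hypothesis to $v'$ and the base-case argument to the last symbol, composing the two halves exactly as in the proof of Lemma~\ref{lem:One-bisimilar-then-equiv}. The lemma then follows from $(s_0,0)\sim r_0$ and the definitions of $B_T$ and $B_U$.

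The main obstacle is the bookkeeping in the time-analogue lemma, because Definition~\ref{def:Half-bisim} licenses a $\Half,\Half$-step only when at least one endpoint is at integer time. One must always choose the decomposition of the total delay so that consecutive half-steps alternate between landing at integer and non-integer times, and, in the converse direction, argue that any concrete choice of individual delays $0<t_j<1$ respecting this alternation re-composes to cover every prescribed total delay $t$. Once this combinatorial bookkeeping is handled, the rest of the argument is a direct translation of the $\One$-case.
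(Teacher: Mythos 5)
Your proposal follows essentially the same route as the paper, whose proof is a one-line deferral stating that the claim is established by the same argument as Lemma~\ref{lem:One-bisimilar-then-equiv} (with details in the cited technical report): an auxiliary time-elapse lemma proved by induction on the number of $\Half$-symbols in $\Half(t)$, followed by induction on the length of the input word, using conditions 1--2 for delays and conditions 3--4 for input/output steps. The only caveats are two minor bookkeeping slips that do not affect the soundness of the strategy: the parenthetical ``$\floor{2t}$ is odd (so $t$ is a positive integer)'' should read ``the length of $\Half(t)$ is even (so $t$ is an integer)'', and in the decomposition one only has $x^*+\frac{1}{2}\in\mathbb{N}$ rather than $x^*+\frac{1}{2}=t$, since a timeout may have reset the clock --- but because timeouts are integers, resets occur at integer global times and the fractional parts of the clock and of the elapsed time stay aligned, which is exactly what your alternation argument needs.
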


\begin{proof}
The claim can be proved using the same argument of Lemma~\ref{lem:One-bisimilar-then-equiv}. See~\cite{techrep2014} for details.
\end{proof}

\begin{lemma}\label{lem:abstract-Half-bisim}
A TFSM with timeouts and timed guards $M$ is $\Half$-bisimilar to the abstract FSM $A_M$.
\end{lemma}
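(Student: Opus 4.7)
The plan is to follow exactly the template of Lemma~\ref{lem:abstract-One-bisim} and produce an explicit $\Half$-bisimulation $\sim$ between the timed states of $M$ and the states of $A_M$. The natural candidate is the relation that pairs a timed state $(s,x)$ with $(s,\langle n,n'\rangle)$ whenever $\langle n,n'\rangle$ is the unique interval of $\mathbb{I}_N$ containing $x$: concretely, $(s,x) \sim (s,[n,n])$ when $x = n$ and $n \leq N$, $(s,x) \sim (s,(n,n+1))$ when $x \in (n,n+1)$ and $0 \leq n < N$, and $(s,x) \sim (s,(N,\infty))$ when $x > N$ (which, by the timed-state constraint inherited from Section~\ref{sec:TFSM-TO}, forces $\Delta_S(s)_{\downarrow\bbN}=\infty$). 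Since $(s_0,0) \sim (s_0,[0,0])$, the initial states are related and it remains to verify the four conditions of Definition~\ref{def:Half-bisim}.

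For conditions 1 and 2, note that the hypothesis $0 < t < 1$ together with $x \in \bbN$ or $x+t \in \bbN$ encodes exactly the two kinds of ``unit'' time steps reflected by the $\Half,\Half$-edges of $A_M$. If $x = n \in \bbN$, then $x+t \in (n,n+1)$ cannot coincide with the integer timeout, so the only available transition in $M$ is $(s,n) \trans{t} (s,n+t)$, and from $n+t < \Delta_S(s)_{\downarrow\bbN}$ one obtains $n+1 \leq \Delta_S(s)_{\downarrow\bbN}$, which by the first clause of Definition~\ref{def:abstract-tfsm-all} licenses the matching abstract edge $(s,[n,n]) \trans{\Half,\Half} (s,(n,n+1))$. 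If instead $x \in (n,n+1)$ and $x+t = n+1$, I split on whether $n+1 < \Delta_S(s)_{\downarrow\bbN}$, which yields $(s,x)\trans{t}(s,n+1)$ matched by the second clause of Definition~\ref{def:abstract-tfsm-all}, or $n+1 = \Delta_S(s)_{\downarrow\bbN}$, which yields the timeout transition $(s,x)\trans{t}(\Delta_S(s)_{\downarrow S},0)$ matched by the third clause. The remaining case, $x \in (N,\infty)$, relies on $\Delta_S(s)_{\downarrow\bbN}=\infty$ and is absorbed by the self-loop given by the fourth clause of Definition~\ref{def:abstract-tfsm-all}. Condition 2 is proved by running the same case analysis in reverse: each $\Half,\Half$-edge of $A_M$ has a unique generating clause in Definition~\ref{def:abstract-tfsm-all}, and for every delay $0 < t < 1$ compatible with the hypothesis one can read off the needed $t$-step in $M$ from the same case distinction.

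Conditions 3 and 4 are almost immediate from the last clause of Definition~\ref{def:abstract-tfsm-all}: an abstract transition $(s,\langle n,n'\rangle) \trans{i,o} (s',[0,0])$ exists iff there is a guarded transition $(s,i,\langle t,t'\rangle,o,s') \in \lambda_S$ with $\langle n,n'\rangle \subseteq \langle t,t'\rangle$, which, for any $x \in \langle n,n'\rangle$, is equivalent to $x \in \langle t,t'\rangle$ and hence to the timed transition $(s,x) \trans{i,o} (s',0)$; the post-states $(s',0)$ and $(s',[0,0])$ are related by construction. I expect the main care point to be the book-keeping at the upper boundary of $\mathbb{I}_N$: one must verify that integer clock values beyond $N$ really only occur when $\Delta_S(s)_{\downarrow\bbN}=\infty$, so that all of them are legitimately collapsed into the single abstract state $(s,(N,\infty))$, and that the transition clauses of Definition~\ref{def:abstract-tfsm-all} at $n = N$ are read with target $(N,\infty)$ rather than an out-of-range $(N,N+1)$. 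Once this boundary convention is made explicit, the rest of the verification is mechanical case analysis.
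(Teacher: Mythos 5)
Your proposal is correct and uses exactly the relation the paper uses, namely $\sim \;=\; \{((s,x),(s,\langle n,n'\rangle)) \mid x \in \langle n,n'\rangle\}$; the paper simply asserts this relation is a $\Half$-bisimulation without spelling out the case analysis, whereas you verify the four conditions of Definition~\ref{def:Half-bisim} explicitly (including the boundary handling at $(N,\infty)$), which is a faithful elaboration rather than a different route.
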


\begin{proof}
The relation $\sim = \{((s,x),(s,\langle n,n'\rangle))\mid x \in \langle n,n'\rangle \}$ is a $\Half$-bisimulation for $M$ and $A_M$.
\end{proof}

\begin{theorem}\label{th:tfsm-to-equiv}
Let $M$ and $M'$ be two TFSM with timeouts and timed guards.
Then $M$ and $M'$ are equivalent if and only if the two abstract FSM $A_M$ and $A_{M'}$ are equivalent.
\end{theorem}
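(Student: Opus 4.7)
The plan is to mirror closely the proof of Theorem~\ref{th:tfsm-tg-equiv}, substituting the $\Half$-abstraction for the $\bbI_N$-abstraction. The two core ingredients are already in hand: by Lemma~\ref{lem:abstract-Half-bisim}, $M$ is $\Half$-bisimilar to $A_M$ and $M'$ is $\Half$-bisimilar to $A_{M'}$; and by Lemma~\ref{lem:Half-bisimilar-then-equiv}, every $\Half$-bisimilar pair satisfies $\Half(B_T(v)) = B_U(\Half(v))$ for every timed input word $v$.

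For the forward direction, I would assume $B_M = B_{M'}$ and derive $\Half(B_M(v)) = \Half(B_{M'}(v))$ for every timed word $v$ over $I$. Applying Lemma~\ref{lem:Half-bisimilar-then-equiv} to both machines yields $B_{A_M}(\Half(v)) = B_{A_{M'}}(\Half(v))$ for every such $v$. To conclude that $A_M$ and $A_{M'}$ agree on all untimed inputs, I must show that every word of the form $\Half^{k_1} i_1 \Half^{k_2} i_2 \cdots \Half^{k_m} i_m$ over $I\cup\{\Half\}$ arises as $\Half(v)$ for some timed word $v$: this is done by choosing the increments $t_j - t_{j-1}$ equal to $k_j/2$ when $k_j$ is even and to $\floor{k_j/2}+1/2$ when $k_j$ is odd, which is exactly the encoding of Definition~\ref{def:Half-abstract-timed-word}.

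For the converse I would argue by contradiction: suppose $A_M$ and $A_{M'}$ are equivalent but $M$ and $M'$ are not, so that some timed input word $v=(i_1,t_1)\cdots(i_m,t_m)$ satisfies $B_M(v) = (o_1,t_1)\cdots(o_m,t_m) \neq (o_1',t_1)\cdots(o_m',t_m) = B_{M'}(v)$. Since outputs are produced instantaneously, both output words share the same timestamp sequence and therefore differ in at least one symbol $o_j \neq o_j'$; consequently their $\Half$-abstractions share the same pattern of $\Half$ symbols but disagree at position $j$, so $\Half(B_M(v)) \neq \Half(B_{M'}(v))$. A final application of Lemma~\ref{lem:Half-bisimilar-then-equiv} gives $B_{A_M}(\Half(v)) \neq B_{A_{M'}}(\Half(v))$, contradicting the equivalence of the abstract FSMs.

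The only non-routine point I anticipate is the surjectivity check in the forward direction, i.e.\ verifying that every admissible untimed input to $A_M$ or $A_{M'}$ is realized by a concrete timed word under $\Half(\cdot)$; once this is in place the argument transports verbatim from the timed-guard case.
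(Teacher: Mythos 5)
Your proposal follows exactly the route the paper intends: the paper states Theorem~\ref{th:tfsm-to-equiv} without an explicit proof, leaving it to follow from Lemma~\ref{lem:Half-bisimilar-then-equiv} and Lemma~\ref{lem:abstract-Half-bisim} by the same argument as Theorem~\ref{th:tfsm-tg-equiv}, which is precisely what you reconstruct. The one point you single out --- surjectivity of $\Half(\cdot)$ --- is handled at the same level of rigor as the paper's own proof of Theorem~\ref{th:tfsm-tg-equiv}: both you and the paper glide over untimed words that are not the abstraction of any timed word (those with trailing delay symbols, or with adjacent input symbols forcing a zero increment, which is excluded since timestamps must strictly increase), so your argument is faithful to, and no weaker than, the paper's.
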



\section{Comparison of TFSM models}
\label{sec:comparison}

In this section we compare the TFSM models considered in this paper with respect to their expressivity. 
It is easy to see that the class of TFSM with timeouts and timed guards includes both TFSM with timeouts and TFSM with timed guards. We start our comparison by showing that TFSM with timed guards and TFSM with timeouts are incomparable. To this end, consider the two TFSM of Figure~\ref{fig:counterexamples}: the following lemmas prove that there is no TFSM with timed guards equivalent to the TFSM with timeouts $M_1$, and no TFSM with timeouts equivalent to the TFSM with timed guards $M_2$.
 
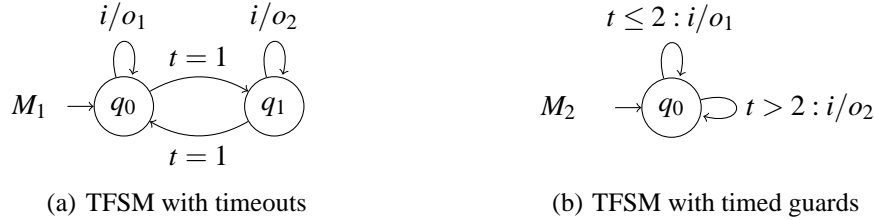
\begin{figure}[tbp]
\ \hfill
	\subfigure[TFSM with timeouts]{
	\begin{tikzpicture}
		\node	(M1) at (-1.25,0) {$M_1$};
		\node[draw,circle] (q0)	 at (0,0)	{$q_0$};
		\node[draw,circle]	(q1) at (2,0)	{$q_1$};
		
		\draw[->] (-0.75,0) -- (q0);
		\path[->] (q0) edge[loop above] node        {$i/o_1$} ();
		\path[->] (q0) edge[bend left]  node[above] {$t=1$} (q1);
		\path[->] (q1) edge[bend left]  node[below] {$t=1$} (q0);
		\path[->] (q1) edge[loop above] node        {$i/o_2$} ();
		\path (-1.75,-0.9)--(3.25,1.25);
	\end{tikzpicture}\label{fig:tfsm-to}}
\hfill
	\subfigure[TFSM with timed guards]{
	\begin{tikzpicture}
		\node	(M2) at (-1.5,0) {$M_2$};
		\node[draw,circle] (q0)	 at (0,0)	{$q_0$};

		\draw[->] (-0.75,0) -- (q0);
		\path[->] (q0) edge[loop above] node        {$t \leq 2: i/o_1$} ();
		\path[->] (q0) edge[loop right] node        {$t > 2: i/o_2$} ();
		\path (-2,-0.9)--(3,1.25);
	\end{tikzpicture}\label{fig:tfsm-tg}}
\hfill\ 
	\caption{Expressivity counterexamples.}
  \label{fig:counterexamples}
\end{figure}

\begin{proposition}
Let $I=\{i\}$, $O = \{o_1,o_2\}$ and consider the complete TFSM with timeouts $M_1$ depicted in Figure~\ref{fig:tfsm-to}. The behavior of $M_1$ cannot be described by any TFSM with timed guards.  
\end{proposition}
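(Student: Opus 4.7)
The plan is to prove impossibility by a straightforward finiteness-versus-unbounded-alternation argument, concentrating on the behaviour of $M_1$ on single inputs.

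First, I would compute $B_{M_1}((i,t))$ for every $t \in \bbR^+$. Starting from the timed state $(q_0,0)$, the timeouts of value $1$ at both states fire alternately, so at absolute time $t$ (with no input yet applied) the current state is $q_0$ when $\floor{t}$ is even and $q_1$ when $\floor{t}$ is odd. Consequently $B_{M_1}((i,t)) = (o_1,t)$ if $\floor{t}$ is even and $B_{M_1}((i,t)) = (o_2,t)$ if $\floor{t}$ is odd, and this alternation persists for arbitrarily large $t$.

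Next, I would argue by contradiction: assume a (complete, deterministic) TFSM with timed guards $M = (S, I, O, \lambda_S, s_0)$ realises $B_{M_1}$. Let $N = \max(M)$. The guards on transitions out of $s_0$ labelled by the unique input $i$ are finitely many intervals whose integer endpoints are bounded by $N$ (with $\infty$ allowed only as the upper bound); by completeness and determinism they form a finite partition of $\bbR^+$. Since each bounded guard is contained in $[0,N]$, at least one guard must be of the form $\langle m,\infty)$ with $m \leq N$, and this guard carries a unique output symbol $o \in \{o_1, o_2\}$. Choosing $t_1, t_2 \in \langle m,\infty)$ with $\floor{t_1}$ even and $\floor{t_2}$ odd (for instance $t_1 = 2N+2$ and $t_2 = 2N+3$), we get $B_M((i,t_1)) = (o,t_1)$ and $B_M((i,t_2)) = (o,t_2)$, while the first step forces these two outputs to use different output symbols. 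This contradicts $B_M = B_{M_1}$ and concludes the proof.

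The only point that needs careful justification is the structural claim that at least one guard from $s_0$ on input $i$ must be unbounded: this is the essential content of the argument and the main obstacle in the sense that it rests on combining three facts — the syntactic restriction that guards are intervals with integer (or $\infty$) endpoints, the bound $N$ on these endpoints, and completeness of $M$. Once that is established, no further technical work is required; everything else reduces to evaluating $B_{M_1}$ on two chosen timestamps.
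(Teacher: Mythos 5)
Your proof is correct and rests on the same idea as the paper's: the finitely many integer-bounded guards leaving $s_0$ force, via completeness and determinism, an unbounded guard on which the output is eventually constant, whereas $M_1$ alternates between $o_1$ and $o_2$ on every unbounded tail. The paper packages this as a two-case analysis on the largest endpoint of the $i/o_1$-labelled guards (finite versus infinite), while you extract the unique unbounded guard directly and evaluate at two points of it with different parities of the integer part; the difference is purely organizational, and your version arguably streamlines the argument by avoiding the case split.
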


\begin{proof}
It is easy to see that $M_1$, under input $i$, produces the output $o_1$ only at those time instants $t$ such that $2n \leq t < 2n + 1$ for some natural $n$, while for time instants $t'$ such that $2n+1 \leq t' < 2n + 2$ the machine produces the output $o_2$.
Suppose that there exists a complete TFSM with timed guards $M_1' = (S, I, O, \lambda_S, s_0, \Delta_S)$ that is equivalent to $M_1$. Let $t_{max}$ be the maximum value that appears on the guards of the transitions exiting the initial state $s_0$ of $M_1'$ that are labelled with $i/o_1$. Two cases may arise.

\begin{itemize}
\item $t_{max} < +\infty$. In this case, let $n$ be such that $t_{max} < 2n + \frac{1}{2}$, and let $w$ be the single-letter timed input word $(i, 2n+\frac{1}{2})$. $M_1$ accepts $w$ and produces the corresponding timed output word $(o_1, 2n+\frac{1}{2})$. Consider now the TFSM with timed guards $M_1'$. At time $2n+\frac{1}{2}$ the TFSM $M_1'$ is still in the initial state $s_0$. However, no active transition exiting $s_0$ at time $2n+\frac{1}{2}$ is labelled with $i/o_1$, since $t_{max} < 2n + \frac{1}{2}$. Since $M_1'$ is complete, there must exist an active transition from $s_0$ at time $2n+\frac{1}{2}$, but this transition must be labelled with $i/o_2$. Hence, under input $w$, $M_1'$ must produce the timed word $(o_2, 2n+\frac{1}{2})$, in contradiction with the hypothesis that $M_1$ and $M_1'$ are equivalent.

\item $t_{max} = +\infty$. In this case, there is a transition exiting $s_0$ labelled with $i/o_1$ and with an interval $\langle t_f,+\infty)$, for some $t_f < +\infty$. Let $n$ be such that $t_{f} < 2n + 1 + \frac{1}{2}$, and let $w$ be the single-letter timed input word $(i, 2n+1+\frac{1}{2})$. $M_1$ accepts $w$ and produces the output word $(o_2, 2n+1+\frac{1}{2})$. Consider now the TFSM with timed guards $M_1'$. Since no input is received before, at time $2n+1+\frac{1}{2}$ the TFSM $M_1'$ is still in the initial state $s_0$. However, since  $t_{f} < 2n +1 +\frac{1}{2}<+\infty$, a transition labelled with $i/o_1$ is active at time $2n +1 +\frac{1}{2}$. Hence, $M_1'$ accepts $w$ and produces the timed output word $(o_1,2n+1+\frac{1}{2})$.
This is a contradiction with the hypothesis that $M_1$ and $M_1'$ are equivalent. \qedhere
\end{itemize}
\end{proof}

\begin{proposition}
Let $I=\{i\}$, $O = \{o_1,o_2\}$ and consider the TFSM with timed guards $M_2$ depicted in Figure~\ref{fig:tfsm-tg}.  
The behavior of $M_2$ cannot be described by any TFSM with timeouts.  
\end{proposition}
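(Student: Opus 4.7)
The plan is to exploit two structural properties of TFSMs with timeouts: (a) the response to an input depends only on the current untimed state, not on the clock value, and (b) in the absence of input the state can change only at positive integer time instants. Together these force the single-input output map $t \mapsto \Untime(B_{M_2'}((i,t)))$ to be piecewise constant on the partition $\{[n,n+1) : n \in \bbN\}$, a granularity that $M_2$ violates at $t=2$.

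More concretely, I would suppose for contradiction that some complete deterministic TFSM with timeouts $M_2' = (S,I,O,\lambda_S,s_0,\Delta_S)$ is equivalent to $M_2$, and restrict attention to single-letter timed input words $(i,t)$. Because $\Delta_S$ takes values in $\bbNpos \cup \{\infty\}$, an easy induction on the number of timeout firings shows that each firing time along the input-free trajectory from $(s_0,0)$ is a finite sum of positive integer timeouts, and hence itself a positive integer. Writing $(s_0,0) \trans{t} (s_t,x_t)$ for the resulting timed transition, the component $s_t$ is therefore piecewise constant with jumps only at positive integers; in particular there exists a single state $\bar s \in S$ with $s_t = \bar s$ for every $t \in [2,3)$, regardless of whether a timeout actually fires at $t=2$. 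Moreover, by the definition of the input/output transition relation in a TFSM with timeouts, whenever $(s,x) \trans{i,o} (s',0)$ holds the output letter $o$ depends only on the pair $(s,i)$, since $\lambda_S \subseteq S\times I\times O\times S$ carries no clock information.

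Combining the two observations, $M_2'$ must produce the same output letter on $(i,t)$ for every $t \in [2,3)$. However, by the guards of $M_2$ we have $B_{M_2}((i,2)) = (o_1,2)$ whereas $B_{M_2}((i,5/2)) = (o_2,5/2)$, so equivalence would force $M_2'$ to produce both $o_1$ and $o_2$ on this interval, a contradiction. The only nontrivial ingredient is the integrality of firing times; once that is in hand, the argument reduces to comparing outputs at the two sample points $t=2$ and $t=5/2$, and I do not expect any further obstacle.
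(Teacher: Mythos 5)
Your proof is correct and follows essentially the same route as the paper's: both exploit the fact that a timeout machine cannot change state between consecutive integer instants and that its output letter depends only on the untimed state, so it cannot distinguish $t=2$ from a nearby $t\in(2,3)$, where $M_2$ switches from $o_1$ to $o_2$. The only difference is presentational — you make the integrality of timeout firing times explicit and compare the fixed sample points $t=2$ and $t=5/2$, whereas the paper fixes the transition $(s,i,o_1,s')$ witnessed at $t=2$ and re-fires it at $t=2+\varepsilon$.
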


\begin{proof}
It is easy to see that $M_2$ (under input $i$) produces the output $o_1$ only at those time instants $t$ such that $t \leq 2$, while for time instants strictly greater than $2$ the machine  produces the output $o_2$.
Suppose by contradiction that there exists a TFSM with timeouts $M_2' = (S, I, O, \lambda_S, s_0, \Delta_S)$ that is equivalent to $M_2$, and consider the single-letter timed input word $w = (i, 2)$. We have that $M_2$ accepts $w$ and produces the output $o_1$ at time $2$. Since $M_2'$ is equivalent to $M_2$, we have that $w$ must be accepted by $M_2'$ as well. Let $(s,x)$ be a timed state of $M_2'$ such that $(s_0,0) \trans{2} (s,x)$. Since $w$ is accepted, we have that there must exist a transition $(s, i, o_1, s') \in \lambda_S$ for some $s' \in S$. Now, let $w'$ be the timed input word $(i, 2 + \varepsilon)$, for some $\varepsilon < 1 \leq \Delta_S(s)_{\downarrow \bbN}$. By the definition of TFSM with timeouts, we have that $(s_0,0) \trans{2 + \varepsilon} (s,x+\varepsilon)$; thus $w'$ is accepted by $M_2'$, which produces the output $o_1$ at time $2 + \varepsilon$, since the input/output transition $(s, i, o_1, s') \in \lambda_S$ can be triggered from the timed state $(s,x+\varepsilon)$. However, $M_2$ produces the output $o_2$ with input $w'$, in contradiction with the hypothesis that $M_2$ and $M_2'$ are equivalent.
\end{proof}

The above examples show that, in general, TFSM with timeouts cannot be transformed into TFSM with timed guards, and that TFSM with timed guards cannot be transformed into TFSM with timeouts. We will now study the restrictions under which this transformation is possible, 

Consider the TFSM with timeouts $M_1$ of Figure~\ref{fig:tfsm-to}: it shows that that cycles of timeout transitions cannot be captured by a TFSM with timed guards. However, when there are no such cycles, we can use Algorithm~\ref{alg:fsm-lf-to-fsm-tg} to transform a TFSM with timeouts into a TFSM with timed guards.

\begin{algorithm}[tbp] 
\caption{Transform a loop-free TFSM with timeouts into a TFSM with timed guards}
\label{alg:fsm-lf-to-fsm-tg}
\KwIn{A loop-free TFSM with timeouts $M =(S, I, O, \lambda_S, s_0, \Delta_S)$}
\BlankLine
	\While{there exists $s_j \in S$ such that $\Delta_S(s_j)_{\downarrow \bbN} < \infty$}{
		\Let{$s_j \in S$ such that $\Delta_S(s_j)_{\downarrow \bbN} < \infty$}\;
		\Let{$(s_k,n) = \Delta_S(s_j)$}\;
		\lForEach{$(s_k,i,[t_1,t_2),o,s_h) \in \lambda_S$}{
			$\lambda_S = \lambda_S \cup \{(s_j,i,[t_1+n,t_2+n),o,s_h)\}$\;
		}
		\lIf{$\Delta_S(s_k)_{\downarrow \bbN} < \infty$}
			{$\Delta_S(s_j) = (\Delta_S(s_k)_{\downarrow S},\Delta_S(s_k)_{\downarrow \bbN}+n)$}
		\lElse{$\Delta_S(s_j) = (s_j, \infty)$}\;
	}
	\Return($S, I, O, \lambda_S, s_0$)\;
\end{algorithm}

\begin{proposition}\label{prop:fsm-lf}
Given a TFSM with timeouts $M$ without loops of timeout transitions, Algorithm~\ref{alg:fsm-lf-to-fsm-tg} terminates and builds a TFSM with timed guards that is equivalent to $M$. 
\end{proposition}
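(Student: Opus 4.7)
The statement has two parts---termination of the while loop and behavioural equivalence with the input $M$---which I would treat separately.

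For termination, the hypothesis that $M$ has no loops of timeout transitions means that the directed graph $\{(s, \Delta_S(s)_{\downarrow S}) : \Delta_S(s)_{\downarrow \bbN} < \infty\}$ on $S$ is acyclic. Hence for every state $s$ the timeout chain starting at $s$ is finite; let $d(s)$ be its length, with $d(s) = 0$ when $\Delta_S(s)_{\downarrow \bbN} = \infty$, and take $\Phi = \sum_{s \in S} d(s)$ as the termination measure. Inspection of the loop body shows that the only thing affecting $\Phi$ is the update of $\Delta_S(s_j)$, which replaces $(s_k, n)$ either by $(\Delta_S(s_k)_{\downarrow S}, n + \Delta_S(s_k)_{\downarrow \bbN})$---shortening the chain from $s_j$ by one---or by $(s_j, \infty)$, dropping $d(s_j)$ from $1$ to $0$. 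No other $d(s)$ is touched, so $\Phi$ strictly decreases at each iteration and the loop must halt.

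For correctness I would maintain the invariant that, viewed as a TFSM with both timeouts and timed guards in the sense of Definition~\ref{def:tfsm-all}, the current machine has the same behaviour as $M$. The invariant holds initially by reading each original transition $(s_k, i, o, s_h) \in \lambda_S$ as $(s_k, i, [0, \Delta_S(s_k)_{\downarrow \bbN}), o, s_h)$, consistent with the semantics of Definition~\ref{def:fsm-to}. For the inductive step, fix $s_j$ being processed and let $(s_k, n) = \Delta_S(s_j)$. Two semantic equivalences are required: first, every composite run $(s_j, 0) \trans{n} (s_k, 0) \trans{t-n} (s_k, t-n) \trans{i, o} (s_h, 0)$ with $t \in [n + t_1, n + t_2)$ is replayed by the direct run $(s_j, 0) \trans{t} (s_j, t) \trans{i, o} (s_h, 0)$ through the newly added guard $[t_1 + n, t_2 + n)$, and vice versa; second, the pair of consecutive timeouts is matched by the single updated timeout of delay $n + \Delta_S(s_k)_{\downarrow \bbN}$, or by the inert self-loop $(s_j, \infty)$ when $\Delta_S(s_k)_{\downarrow \bbN} = \infty$. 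Because the newly added guards lie inside $[n, n + \Delta_S(s_k)_{\downarrow \bbN})$ whereas the transitions previously enabled at $s_j$ have implicit guards inside $[0, n)$, no guard overlap is introduced and determinism and completeness are preserved. At termination every state has infinite timeout, so the returned tuple $(S, I, O, \lambda_S, s_0)$ is a genuine TFSM with timed guards, equivalent to $M$ by the invariant.

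The main obstacle is the bookkeeping across the three different models: the input is a TFSM with timeouts, the intermediate object must be read as a TFSM with both timeouts and guards, and the output as a pure TFSM with timed guards. The subtle cases---making the implicit initial guards explicit, and handling the $\Delta_S(s_k)_{\downarrow \bbN} = \infty$ branch of the inductive step, where the new guards become unbounded on the right and the timeout is neutralised by the convention $(s_j, \infty)$---are where the correctness proof will need the most care.
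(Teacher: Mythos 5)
The paper itself does not give a proof of Proposition~\ref{prop:fsm-lf}: the proof body is just a pointer to the technical report \cite{techrep2014}, so there is no in-paper argument to compare yours against. On its own merits your proof is sound and well structured: the two-part decomposition (termination via a decreasing integer measure on timeout-chain lengths, correctness via an invariant stating that the intermediate machine, read as a TFSM with both timeouts and timed guards, stays equivalent to $M$) is exactly what the algorithm calls for, and you correctly identify the one genuinely delicate point, namely that the input transitions must first be read with the implicit guards $[0,\Delta_S(s)_{\downarrow\bbN})$ before the \textbf{foreach} line, which quantifies over guarded transitions, even makes sense. Two small points need tightening. First, your claim that ``no other $d(s)$ is touched'' when $\Delta_S(s_j)$ is updated is inaccurate: every state whose timeout chain passes through $s_j$ also has its $d$-value decreased by one. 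This only reinforces the conclusion that $\Phi$ strictly decreases, but as stated the sentence is false. Second, for $d$ to remain well defined across iterations you should note that the update preserves acyclicity of the timeout graph: the edge $s_j \to s_k$ is replaced by $s_j \to \Delta_S(s_k)_{\downarrow S}$, and any cycle through this new edge would, together with the old edge $s_k \to \Delta_S(s_k)_{\downarrow S}$, yield a cycle in the old graph, a contradiction. With these two repairs the argument is complete; the one-step unfolding equivalence you state (a wait of $t \geq n$ at $s_j$ followed by an input matches the shifted guard $[t_1+n,t_2+n)$, and consecutive timeouts compose into the single timeout $n+\Delta_S(s_k)_{\downarrow\bbN}$) is the correct semantic core, and your observation that the inherited guards occupy $[n,\cdot)$ while the pre-existing ones occupy $[0,n)$ is what preserves determinism and completeness.
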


\begin{proof}
See ~\cite{techrep2014}.
\end{proof}

Consider now the TFSM with timed guards $M_2$ of Figure~\ref{fig:tfsm-tg}: it suggests that transitions with left-open guards $( t_1, t_2\rangle$  or with right-closed guards $\langle t_1, t_2 ]$  cannot be captured by TFSM with timeouts. The following proposition shows that if we force all guards to be left-closed and right-open intervals $[t_1, t_2)$, then the translation into a TFSM with timeouts is possible.

\begin{proposition}\label{prop:fsm-lcro}
The behavior of a TFSM with timed guards can be described by a TFSM with timeouts if all guards are left-closed and right-open.
\end{proposition}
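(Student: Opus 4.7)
The plan is to give an explicit construction that, from a complete deterministic TFSM with timed guards $M = (S, I, O, \lambda_S, s_0)$ all of whose guards have the form $[t_1, t_2)$ with $t_1 \in \bbN$ and $t_2 \in \bbN \cup \{\infty\}$, builds a TFSM with timeouts $M'$ equivalent to $M$. The core idea is to split each state $s$ of $M$ into a collection of sub-states, one per \emph{clock region} of $s$, and to simulate the continuous passage of time between adjacent regions by timeout transitions; equivalence is then established via a direct bisimulation between the timed states of the two machines.

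Concretely, for each state $s \in S$ let $0 = n_0^s < n_1^s < \cdots < n_{k_s}^s$ be the increasing list of integer endpoints appearing in guards of transitions exiting $s$ (including $0$). Since $M$ is complete and deterministic and every guard is left-closed right-open, the intervals $I_0^s = [n_0^s, n_1^s), \ldots, I_{k_s-1}^s = [n_{k_s-1}^s, n_{k_s}^s), I_{k_s}^s = [n_{k_s}^s, \infty)$ form a common refinement of the guards of \emph{every} input at $s$: each $I_j^s$ is entirely contained in exactly one guard $[t_1, t_2)$ for every input $i \in I$. Define $S' = \{(s, I_j^s) \mid s \in S,\ 0 \leq j \leq k_s\}$ with initial state $(s_0, I_0^{s_0})$; put $\Delta_{S'}((s, I_j^s)) = ((s, I_{j+1}^s),\, n_{j+1}^s - n_j^s)$ for $j < k_s$ and $\Delta_{S'}((s, I_{k_s}^s)) = ((s, I_{k_s}^s), \infty)$; and for each transition $(s, i, [t_1, t_2), o, s') \in \lambda_S$ and each $I_j^s \subseteq [t_1, t_2)$, add $((s, I_j^s), i, o, (s', I_0^{s'}))$ to $\lambda_{S'}$.

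To prove $B_M = B_{M'}$, I would define the relation $(s, x) \sim ((s, I_j^s), x - n_j^s)$ whenever $x \in I_j^s$, and verify that it relates the initial timed states and is preserved by both timed and input/output transitions. A pure delay in $M$ that crosses the boundary $x = n_{j+1}^s$ corresponds, in $M'$, to the sub-clock of $(s, I_j^s)$ reaching exactly $n_{j+1}^s - n_j^s$ and firing the timeout to $((s, I_{j+1}^s), 0)$, matching $(s, n_{j+1}^s)$ under $\sim$. An input $i$ received at $(s, x)$ with $x \in I_j^s$ triggers in $M$ the unique guard $[t_1, t_2) \supseteq I_j^s$ producing $o$ and moving to $(s', 0)$; in $M'$ the copy of that transition added to $\lambda_{S'}$ is fireable at any sub-clock value (so in particular at $x - n_j^s$) and lands in $((s', I_0^{s'}), 0)$, which is again related. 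The equality of behaviours then follows by induction on the length of the timed input word, along the lines of Lemma~\ref{lem:One-bisimilar-then-equiv}.

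The main subtlety lies at the integer boundaries: a clock value $x = n_j^s$ belongs to $I_j^s$ but not to $I_{j-1}^s$, which is exactly the shape of the left-closed right-open guard $[n_j^s, \cdot)$. This is precisely where the hypothesis is used: for a guard $(t_1, t_2]$ or $[t_1, t_2]$ the endpoint $t_2$ would lie inside the guard but not inside the corresponding sub-state region of any timeout construction, producing at $x = t_2$ a discontinuity in output behaviour that no TFSM with timeouts can match---exactly the phenomenon witnessed by the counter-example $M_2$ of Figure~\ref{fig:counterexamples}. Once the boundary alignment is handled, the rest of the argument is routine.
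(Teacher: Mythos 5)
Your construction is the same as the paper's: you split each state $s$ into sub-states indexed by the intervals between consecutive guard endpoints, chain them with timeouts equal to the interval lengths (with an $\infty$ self-loop on the last one), copy each input/output transition onto every sub-interval contained in its guard, and verify the correspondence of delay-then-input transition sequences (the paper does this directly, you phrase it as the bisimulation $(s,x)\sim((s,I_j^s),x-n_j^s)$, which amounts to the same check). The proof is correct and follows essentially the same route as the paper's.
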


\begin{proof}
Let $M =(S, I, O, \lambda_S, s_0)$ be a TFSM with timed guards where every transition $(s,i,g,o,s') \in \lambda_S$ is such that $g = [t_{min},t_{max})$ is a left-closed and right-open interval. We build a TFSM with timeouts $M' = (S', I, O, \lambda_S',s_0',\Delta_S')$ equivalent to $M$ as follows.

\begin{enumerate}
\item Given an enumeration of the states $S = \{s_0, s_1, \dots, s_n\}$ of $M$, for each state $s_j \in S$ we define the corresponding set $T_j = \{0 = t_0^j < t_1^j < \dots < t_m^j = \infty\}$ as the set of all values that appear in the guards of transitions departing from $s_j$. The states of $M'$ are pairs $(s_j,[t_k^j,t_{k+1}^j))$ where $s_j \in S$ is a state of $M$ and $t_k^j,t_{k+1}^j \in T_j$. Intuitively, a state $(s_j,[t_k^j,t_{k+1}^j))$ of $M'$ corresponds to  the situation when $M$ is in state $s_j$ with clock value in the interval $[t_k^j,t_{k+1}^j)$. 
\item The timeout transition is defined as $\Delta_S'((s_j,[t_k^j,t_{k+1}^j))) = ((s_j,[t_{k+1}^j,t_{k+2}^j)),$ $t_{k+1}^j-t_k^j)$ when $t_{k+1}^j < \infty$, and as $\Delta_S'((s_j,[t_k^j,t_{k+1}^j))) = ((s_j,[t_{k}^j,t_{k+1}^j)),\infty)$ when $t_{k+1}^j = \infty$. 
\item For every  $(s_j,i,[t_{min},t_{max}),o,s_k) \in \lambda_S$ we put a transition $((s_j,[t_h^j,t_{h+1}^j)),i,o,(s_k,[0,t_1^k))) \in \lambda_S'$ for every $h$ such that $t_{min} \leq t_h^j$ and $t_{h+1}^j\leq t_{max}$.
\item The initial state of $M'$ is $s_0' = (s_0,[0,t_1^0))$.
\end{enumerate}

To prove that $M$ and $M'$ have the same behavior, we have to show that for every sequence of transitions $(s_j,0) \trans{t} (s_j, t) \trans{i,o} (s_k,0)$ of $M$ there is a corresponding sequence in $M'$, and vice-versa. Now, since $(s_j,t) \trans{i,o} (s_k,0)$ is an input/output transition of $M$, 
we have that there exists a transition $(s_j,i,[t_{min},t_{max}),o,s_k) \in \lambda_S$ such that $t \in [t_{min},t_{max})$.
By 2., we have that $((s_j,[0,t_1^j)),0) \trans{t} ((s_j,[t_h^j,t_{h+1}^j)),t')$ for some $h$ such that $t_h^j \leq t < t_{h+1}^j$, and with $t' = t - t_h^j$. By 3., we have that $((s_j,[t_h^j,t_{h+1}^j)),i,o,(s_k,[0,t_{1}^k))) \in \lambda_S'$. Hence, $((s_j,[0,t_1^j)),0) \trans{t} ((s_j,[t_h^j,t_{h+1}^j)),t') \trans{i,o} ((s_k,[0,t_1^k)),0)$ is a sequence of transitions of $M'$.

To conclude the proof, suppose that $((s_j,[0,t_1^j)),0) \trans{t} ((s_j,[t_h^j,t_{h+1}^j)),t') \trans{i,o} ((s_k,[0,t_1^k)),0)$ is a sequence of transitions of $M'$. By 2., we have that $h$ is such that $t_h^j \leq t < t_{h+1}^j$. Since $((s_j,[t_h^j,t_{h+1}^j)),t') \trans{i,o} ((s_k,[0,t_1^k)),0)$, we have that there exists a transition $((s_j,[t_h^j,t_{h+1}^j)),i,o,(s_k,[0,t_1^k))) \in \lambda_S'$. By 3., there exists a transition $(s_j,i,[t_{min},t_{max}),o,s_k) \in \lambda_S$ such that $t_{min} \leq t_h^j$ and $t_{h+1}^j \leq t_{max}$. Hence, we have proved that $(s_j, 0) \trans{t} (s_j,t) \trans{i,o} (s_k,0)$ is a valid sequence of transitions in $M$.  \end{proof}

The results obtained in this section are summarized in Figure~\ref{fig:comparison}: neither TFSM with timed guards nor TFSM with timeouts are sufficient to describe the behavior of each other, nor of TFSM with both timed guards and timeouts. Moreover, we would like to point out that Algorithm~\ref{alg:fsm-lf-to-fsm-tg} builds a TFSM with LCRO guards from a loop-free TFSM with timeouts, and that the construction in the proof of Proposition~\ref{prop:fsm-lcro} builds a loop-free TFSM with timeouts from a TFSM with LCRO guards. This allows us to conclude that the two classes are equivalent.

\begin{figure}[tbp]
   \centering
	\begin{tikzpicture}[xscale=1.5,font=\footnotesize]
		\node (tfsm) at (0,0) 	{TFSM with timed guards and timeouts};
		\node (fsm-to)	at (-2,-0.9) {TFSM with timeouts};
		\node (fsm-tg)	at (2,-0.9) {TFSM with timed guards};
		\node (fsm-lf) at (-2,-2.1) {Loop-free TFSM with timeouts};
		\node (fsm-lcro) at (2,-2.1) {TFSM with LCRO timed guards};
		\node (fsm) at (0,-3) {Untimed FSM};
		\draw[->] (tfsm) -- (fsm-to);
		\draw[->] (tfsm) -- (fsm-tg);
		\draw[->] (fsm-to) -- (fsm-lf);
		\draw[->] (fsm-to) -- (fsm-lcro);
		\draw[->] (fsm-tg) -- (fsm-lf);
		\draw[->] (fsm-tg) -- (fsm-lcro);
		\draw[<->] (fsm-lcro) -- (fsm-lf);
		\draw[->] (fsm-lcro) -- (fsm);
		\draw[->] (fsm-lf) -- (fsm);
	\end{tikzpicture}

   \caption{Comparison of TFSM models.}\label{fig:comparison}
\end{figure}
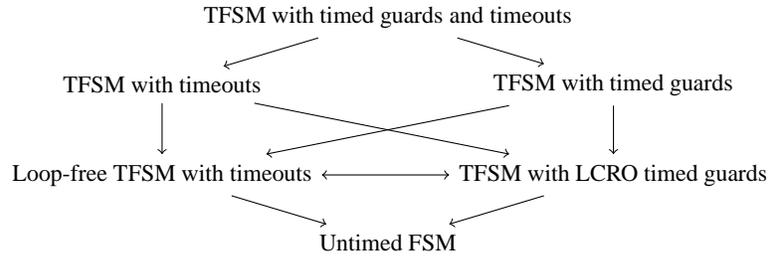

\section{Conclusions}
\label{sec:conclusions}

In this paper, we investigated some models of deterministic TFSMs with 
a single clock: TFSMs with only timed guards, TFSMs with only timeouts, 
and TFSMs with both timed guards and timeouts.
We showed that the behaviours of the timed FSMs are equivalent if and only 
if the behaviours of the companion untimed FSMs obtained by time-abstracting
bisimulations are equivalent.
Moreover, we compared their expressive power, characterizing subclasses 
of TFSMs with timed guards and of TFSMs with timeouts that are equivalent 
to each other.
These timed FSM models exhibit a good trade-off between
expressive power and ease of analysis.
We are currently generalizing these results to non-deterministic timed FSMs,
and comparing our models with classical FSMs and special classes of timed 
automata (e.g., with a single clock).
Future work includes deriving tests for a timed FSM with timed guards and
timeouts, extending the derivation for a timed FSM with timed guards and for
a timed FSM with timeouts, respectively, in~\cite{ElFakih-scp2014}
and in~\cite{Zhigulin2011}.
Finally, we will define the composition of timed FSMs and investigate
the solution of equations over timed FSMs to synthesize unknown
timed FSMs~\cite{villa-ucp-book}.

\bibliographystyle{eptcs}
\bibliography{tfsms}

\end{document}